\documentclass[journal]{IEEEtran}
\bibliographystyle{IEEEtran}

\usepackage{graphicx,color}
\usepackage{cite}
\usepackage{setspace} 
\usepackage{amsmath}
\usepackage{amsmath,amsthm,amssymb}
\usepackage{multirow}
\usepackage{hhline}
\usepackage{epsfig}
\usepackage{subfigure}
\usepackage{epstopdf}
\usepackage{verbatim}
\usepackage{algorithm}
\usepackage{algorithmicx}
\usepackage{algpseudocode}
\usepackage{etoolbox}
\usepackage{cases}
\usepackage{mathtools,stmaryrd}
\usepackage{bbm}
\usepackage{multicol}
\usepackage{bbding}

\SetSymbolFont{stmry}{bold}{U}{stmry}{m}{n}

\makeatother

\newtheorem{remark}{Remark}
\newtheorem{theorem}{Theorem}
\newtheorem{prop}{Proposition}
\newtheorem{coro}{Corollary}


\allowdisplaybreaks

\begin{document}

\title{Differential Chaos Shift Keying-based \\Wireless Power Transfer with Nonlinearities}

\author{Authors}
\author{Priyadarshi Mukherjee, \textit{Member, IEEE}, Constantinos Psomas, \textit{Senior Member, IEEE}, and~Ioannis Krikidis, \textit{Fellow, IEEE}
\thanks{P. Mukherjee, C. Psomas, and I. Krikidis are with the Department of Electrical and Computer Engineering, University of Cyprus, Nicosia 1678 (E-mail: \{mukherjee.priyadarshi, psomas, krikidis\}@ucy.ac.cy).  Preliminary results of this work have been presented at the IEEE International Conference on Acoustics, Speech, and Signal Processing $2021$, Toronto, Canada \cite{icassp}.

This work was co-funded by the European Regional Development Fund and the Republic of Cyprus through the Research and Innovation Foundation, under the projects EXCELLENCE/0918/0377 (PRIME), INFRASTRUCTURES/1216/0017 (IRIDA) and POST-DOC/0916/0256 (IMPULSE). It has also received funding from the European Research Council (ERC) under the European Union's Horizon 2020 research and innovation programme (Grant agreement No. 819819).}}

\maketitle

\begin{abstract}
In this paper, we investigate conventional communication-based chaotic waveforms in the context of wireless power transfer (WPT). Particularly, we present a  differential chaos shift keying (DCSK)-based WPT architecture, that employs an analog correlator at the receiver, in order to boost the energy harvesting (EH) performance. We take into account the nonlinearities of the EH process and derive closed-form analytical expressions for the harvested direct current (DC) under a generalized Nakagami-$m$ block fading model. We show that, in this framework, both the peak-to-average-power-ratio of the received signal and the harvested DC, depend on the parameters of the transmitted waveform. Furthermore, we investigate the case of deterministic unmodulated chaotic waveforms and demonstrate that, in the absence of a correlator, modulation does not affect the achieved harvested DC. On the other hand, it is shown that for scenarios with a correlator-aided receiver, DCSK significantly outperforms the unmodulated case. Based on this observation, we propose a novel DCSK-based signal design, which further enhances the WPT capability of the proposed architecture; corresponding analytical expressions for the harvested DC are also derived. Our results demonstrate that the proposed architecture and the associated signal design, can achieve significant EH gains in DCSK-based WPT systems. Furthermore, we also show that, even by taking into account the nonlinearities at the transmitter amplifier, the proposed chaotic waveform performs significantly better in terms of EH, when compared with the existing multisine signals.
\end{abstract}

\begin{IEEEkeywords}
Differential chaos shift keying, wireless power transfer, nonlinear energy harvesting, Nakagami-$m$ fading channel.
\end{IEEEkeywords}

\IEEEpeerreviewmaketitle

\section{Introduction}

The wireless traffic has been growing at an explosive rate in recent years; according to Ericsson, it is expected to increase more than five times between $2019$ and $2025$ \cite{ericsson}. For applications like the Internet of Things and  massive machine-type communications, where a large number of devices are deployed, the overall network lifetime is  often affected due to limited battery constraints. Thus, powering or charging these devices becomes critical as well as costly. As a result, low-powered and self-sustainable next generation wireless communication networks is an important and relevant topic of research, for both industry and academia. In this context, based on the advances made in recent years, wireless power transfer (WPT) can be considered as a suitable candidate, where the devices are wirelessly powered by harvesting energy from ambient/dedicated radio-frequency (RF) signals. This is achieved by employing a rectifying antenna (rectenna) at the receiver that converts the received RF signals to direct current (DC) \cite{srv1}. One of the main advantages of this process is that it can be controlled, unlike conventional energy sources, where the available power for harvesting is itself erratic in nature \cite{srv1}.

The design of efficient WPT architectures fundamentally relies on accurate mathematical models of the harvesting circuit. While some works propose simplified linear energy harvesting (EH) models \cite{sir}, the work in \cite{plinear} proposes a piece-wise linear model based on the sensitivity and saturation levels of the circuit. On the other hand, several nonlinear EH models have also been proposed. The authors in \cite{satm} propose a tractable logistic saturation nonlinear model that originates from the saturation of the output power beyond a certain input RF power due to the diode breakdown. As this model is obtained by fitting measurements from practical RF-based EH circuits for a given excitation signal, it is a significantly improved version of its over-simplified piece-wise linear counterpart. However, all these models fail to characterize the actual working principle of the harvesting circuit. As a result, unlike the linear, piece-wise linear, and the logistic saturation nonlinear model, the work in \cite{harv} proposes a realistic circuit-based nonlinear model of the harvester circuit, which relies on the EH circuit characteristics and also enables the design of waveforms that maximize the WPT efficiency.

This model has triggered recent interests in the area of waveform design for WPT, with an objective of making the best use of the harvester to deliver a maximum amount of DC power. The authors in \cite{papr} show that the nonlinearity of the rectification process causes certain waveforms, with high peak-to-average-power-ratio (PAPR), to provide a higher DC output compared to conventional constant-envelop sinusoidal signals. Based on this observation, some works, e.g. \cite{wdesg,hparam,fhelps,npsk,bweff}, investigate the effect of transmitted waveforms and modulations on WPT. The authors in \cite{wdesg} propose the use of multisine waveforms for WPT due to their high PAPR. The work in \cite{hparam} proposes a novel simultaneous wireless information and power transfer (SWIPT) architecture based on the superposition of multi-carrier unmodulated and modulated waveforms at the transmitter. The authors in \cite{fhelps} provide insights on how fading and diversity are beneficial for boosting the RF-to-DC conversion efficiency; they develop a new  form of signal design for WPT, which relies on multiple dumb antennas at the transmitter to induce fast fluctuations of the channel. The work in \cite{npsk} proposes an asymmetric modulation scheme specifically for SWIPT, that significantly enhances the rate-energy region as compared to its existing symmetric counterpart. Apart from the multisine waveforms, experimental studies demonstrate
that due to their high PAPR, chaotic waveforms outperform conventional single-tone signals in terms of WPT efficiency \cite{chaosexp2}.

Chaotic signals have been used in the past decades for wireless information security and privacy \cite{sec1,sec2}. However, due to their properties such as aperiodicity and sensitivity to initial data, chaotic waveforms have been extensively used to improve the performance of wireless digital communication systems \cite{ch2}. In this context, differential chaos shift keying (DCSK), is one of the most widely studied chaotic signal-based communication systems \cite{ch1}. DCSK is a prominent benchmark in the class of non-coherent transmitted reference modulation techniques, which comprises of a reference and an identical/inverted replica of the reference depending on the data transmitted. The majority of the related works focus on the error performance of DCSK-based systems for various scenarios \cite{ch3,ch4,ch5}. The authors in \cite{ch3} propose an $M$-ary DCSK system, in which successive bits are converted to a symbol and then transmitted by using the same modulation scheme. The work in \cite{ch4} investigates a general set-up for evaluating the performance of DCSK system over various communication channels and the performance of a  cooperative diversity-aided DCSK-based system is analysed in \cite{ch5}.

To exploit the benefits of both DCSK and WPT, there are a some works \cite{chaoswipt1,chaoswipt2,chaoswipt3,chaoswipt4} in the literature that investigate SWIPT in a chaotic framework. In \cite{chaoswipt1}, a non-coherent short-reference DCSK SWIPT architecture is proposed by using the time switching architecture to achieve higher data rate than conventional systems. The authors in \cite{chaoswipt2} investigate SWIPT with a power-splitting receiver for a multi-carrier (MC) chaotic framework and propose two WPT protocols. In the first protocol, a fraction of the reference sub-carrier power is utilized for EH and in the second protocol, a fraction of the total sub-carrier power is used for EH. Based on the proposed MC architecture, a chaotic carrier-index (CI)  system is investigated for a basic SWIPT set-up to further reduce the energy consumption \cite{chaoswipt3}. In particular, based on the transmission characteristics of index modulation, the proposed SWIPT scheme exploits the inactive carriers of CI-DCSK to deliver energy by transmitting random noise-like signals. In \cite{chaoswipt4}, an adaptive link selection for buffer-aided relaying is investigated in a decode-and-forward relay-based DCSK-SWIPT architecture. Furthermore, by taking into account the decoding cost at the relays, two link-selection schemes based on the harvested energy, and not channel state information (CSI), are proposed.

The above studies focus on the performance of SWIPT in a DCSK-based scenario. Furthermore, they consider a simplified linear model for harvesting, which is impractical. However, the specific gains in EH from chaotic signals have not been explored. Motivated by this, in this paper, we investigate chaotic signal-based waveform designs for WPT. To the best of our knowledge, this is the first work that presents a complete analytical framework of DCSK-based WPT, by also taking into account the nonlinearities of the EH process. Specifically, the contribution of this paper is threefold.
\begin{itemize}
\item We propose a DCSK-based WPT architecture, where an analog correlator-aided EH circuit is employed at the receiver. Although WPT is the main focus of this work, we consider DCSK, i.e. an information-based waveform. We analytically derive the PAPR of the signal at the harvester as a function of the transmitted waveform parameters. We demonstrate that the analog correlator prior to the harvester allows us to control the PAPR of the signal at the input of the harvester and therefore the EH performance.

\item Analytical expressions of the harvested DC are derived for both cases, i.e. with and without the analog correlator at the receiver, under a Nakagami-$m$ block fading scenario. The derived closed-form expressions are a function of the spreading factor and the fading parameter $m$ and are verified by extensive Monte Carlo simulations. They provide a quick and convenient methodology of evaluating the system's performance and obtaining insights into how key system parameters affect the performance. In addition, our results show that similar to \cite{fhelps}, a Rayleigh fading scenario results in a higher harvested DC performance, compared to a no-fading scenario, which shows that fading is beneficial for DCSK-based WPT.

\item We apply the proposed WPT architecture to unmodulated chaotic waveforms and investigate how it performs against its modulated DCSK counterpart. We demonstrate that modulation does not affect the EH performance without the correlator at the receiver. On the contrary, with a correlator-aided receiver, DCSK significantly outperforms the unmodulated case. Based on this observation, we propose a novel short reference DCSK-based signal design, which results in further EH performance gains. Finally, we compare the proposed waveform with the existing $N$-tone multisine signals, by taking into account the imperfections of the high power amplifier (HPA) at the transmitter.
\end{itemize}
The rest of this paper is organized as follows: Section II introduces the proposed system architecture. Section III presents the complete analytical framework of chaotic signal-based WPT by considering a nonlinear rectenna model. Section IV proposes a novel WPT-optimal chaotic waveform. Numerical results are presented in Section V, followed by our conclusions in Section VI.

\section{A Chaotic Signal-based WPT System Architecture} \label{SM}

\begin{figure*}[!t]
\centering\includegraphics[width=0.8\linewidth]{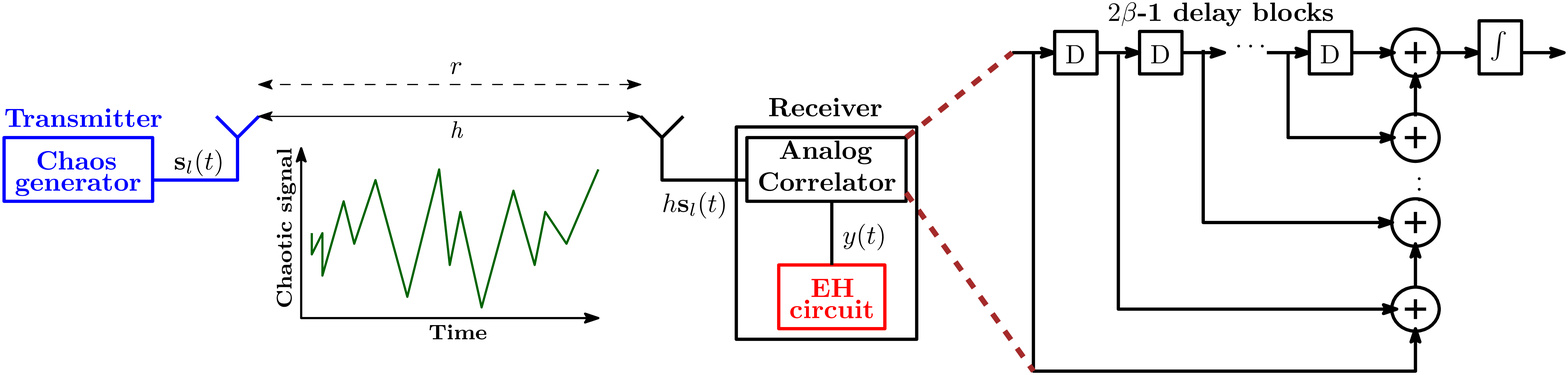}
\caption{Proposed architecture for chaotic signal-based WPT; ``D'' corresponds to the delay blocks and the symbol ``$\int$'' corresponds to the integrator.}
\label{fig:model}
\end{figure*}


\subsection{System model}

We consider a point-to-point WPT set-up, where the transmitter employs a DCSK generator and the receiver consists of an analog correlator (discussed in Section \ref{AC}) followed by an EH circuit, as depicted in Fig. \ref{fig:model}. Note that even though DCSK is primarily a communication-based signal, in this work, we focus on its effects on WPT\footnote{This assumption refers to communication networks that operate based on DCSK modulation and EH receivers are added to operate by the ambient radiation.}.

We assume that the wireless link suffers from both large-scale path-loss effects and small-scale block fading. Specifically, the received power is proportional to $r^{-\alpha}$, where $r$ is the transmitter-receiver (Tx-Rx) distance and $\alpha$ denotes the path-loss exponent. Moreover, a Nakagami-$m$ distributed block fading channel $h$ is considered with unit mean power, where $m \geq 1$.

\subsection{Chaotic signals}

Assume a DCSK signal, where the current symbol is dependent on the previous symbol [17] and different sets of chaotic sequences can be generated by using different initial conditions. Each transmitted bit is represented by two sets of chaotic signal samples, with the first set representing the reference, and the other conveying information. If $+1$ is to be transmitted, the data sample will be identical to the reference sample. Otherwise, an inverted version of the reference sample will be used as the data sample [16]. During the $l$-th transmission interval, the output of the transmitter is
\begin{align}  \label{sym}
s_{l,k}=\begin{cases} 
x_{l,k}, & k=2(l-1)\beta+1,\dots,(2l-1)\beta,\\
d_lx_{l,k-\beta}, & k=(2l-1)\beta+1,\dots,2l\beta,
\end{cases}&
\end{align}
where $d_l=\pm1$ is the information bit, $x_{l,k}$ is the chaotic sequence used as the reference signal, and $x_{l,k-\beta}$ is its delayed version. Let $\beta$ be a non-negative integer, defined as the \textit{spreading factor}. Then, $2\beta$ chaotic samples are used to spread each information bit. Furthermore, $x_{l,k}$ can be generated according to various existing chaotic maps. Due to its good correlation properties, we consider the Chebyshev map of degree $\xi$ for chaotic signal generation, which is defined as \cite{ch2}
\begin{align}
x_{k+1}=\cos(\xi\cos^{-1}(x_k)), \forall \:\:|x_k| \leq 1.
\end{align}

\subsection{Analog correlator}   \label{AC}

To further boost the EH performance, we propose a WPT receiver architecture, where an analog correlator \cite{anaco2} preceeds the EH rectifier circuit (see Fig. \ref{fig:model}). The motivation behind the employment of an analog correlator\footnote{It is to be noted that the analog correlator is an active device, but with a nominal power consumption. Specifically, it has been observed, that a CMOS-implemented 11-bit analog correlator consumes 40 nW of power \cite{anaco2}. As a result, we have not considered this energy consumption in the context of applying the correlator.} block is two fold: i) it enables controlling the signal at the input of the harvester, and ii) it enhances the PAPR of the signal. An analog correlator primarily consists of a series of delay blocks, which result in signal integration over a specified period of time; an ideal $\psi$-bit analog correlator consists of $(\psi-1)$ number of delay blocks \cite{anaco2}.
In what follows, for the sake of simplicity, we consider $\psi$ as the transmitted DCSK symbol length,  i.e. $\psi=2\beta$. Accordingly, the analog correlator output $y_l(t)$ for the $l$-th transmitted symbol is
\begin{equation}    \label{corr}
y_l(t)=\sqrt{P_t}h_l\sum\limits_{k=1}^{2\beta} s_{l,k}(t),
\end{equation}
where $P_t$ is the transmission power. Note that we propose the use of a correlator and not a summing circuit in Rx. A summing circuit generates an output, which is the sum of multiple input signals \cite{ssmith} whereas the analog correlator allows an effective integration of the same signal over a certain time interval \cite{anaco2}, i.e. it generates an output, which is the sum of the delayed versions of the same signal. In other words, the input signal is correlated with a delayed version of itself. Note that the value $\psi=1$ corresponds to the conventional case without a correlator.

In the following proposition, we show the effect of the analog correlator on the signal's PAPR.
\begin{prop}    \label{prp1}
The signal PAPR at the harvester input is
\begin{align}  \label{prop1}
\mathrm{PAPR}=\begin{cases} 
2, & \text{without correlator} \:\: (\psi=1),\\
4\beta, & \text{with correlator} \:\: (\psi=2\beta).
\end{cases}&
\end{align}
\end{prop}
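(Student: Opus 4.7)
The plan is to compute directly the peak instantaneous power and the average power of the signal at the input of the EH circuit in both scenarios, and take their ratio. Before starting, I note two standard facts about the Chebyshev map that I will invoke: its invariant density is $1/(\pi\sqrt{1-x^2})$ on $[-1,1]$, hence $\mathbb{E}[x_{l,k}]=0$, $\mathbb{E}[x_{l,k}^2]=1/2$, and $|x_{l,k}|\le 1$; moreover successive iterates are (asymptotically) uncorrelated. Together with the Nakagami fading realization $h_l$ and the path-loss being common to numerator and denominator, these quantities cancel when forming the PAPR, so it suffices to work with the normalized signal.

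For the case $\psi=1$ the harvester sees $\sqrt{P_t}\,h_l\,s_{l,k}$. The peak magnitude is obtained directly from $|x_{l,k}|\le 1$, giving peak power $P_t h_l^2$, while the average power is $P_t h_l^2\,\mathbb{E}[x_{l,k}^2]=P_t h_l^2/2$. Dividing yields the value $2$, which proves the first branch of \eqref{prop1}.

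For the case $\psi=2\beta$ the harvester sees $y_l(t)$ from \eqref{corr}. The key step is to exploit the DCSK structure in \eqref{sym}: within one symbol the data half of the sum is exactly $d_l$ times the reference half, so
\begin{equation*}
y_l = \sqrt{P_t}\,h_l\,(1+d_l)\sum_{k=2(l-1)\beta+1}^{(2l-1)\beta} x_{l,k}.
\end{equation*}
The peak instantaneous power is attained when $d_l=+1$ and all $\beta$ reference chips saturate at the same extreme value $\pm 1$, producing $|y_l|^2_{\max}=4\beta^{2}P_t h_l^{2}$. For the average power, I average over the equiprobable $d_l\in\{\pm1\}$, which contributes the factor $\mathbb{E}[(1+d_l)^2]=2$, and use the (uncorrelated, zero-mean, variance-$1/2$) chip statistics to write $\mathbb{E}\bigl[(\sum_k x_{l,k})^2\bigr]=\beta/2$. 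Combining gives $\mathbb{E}[|y_l|^2]=\beta P_t h_l^2$, hence $\mathrm{PAPR}=4\beta^2/\beta=4\beta$, matching the second branch of \eqref{prop1}.

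The main obstacle in this route is justifying the average-power step: strictly speaking, the Chebyshev iterates are only asymptotically uncorrelated, so one needs to argue that the residual cross-term contributions $\sum_{i\ne j}\mathbb{E}[x_{l,i}x_{l,j}]$ are negligible for the spreading factors of interest (or, alternatively, replace $\mathbb{E}[x_{l,k}^2]$ by an explicit ensemble average derived from the invariant density and bound the cross terms using the mixing properties of the map). Once this approximation is accepted, as is standard in the DCSK literature, the rest of the calculation is elementary.
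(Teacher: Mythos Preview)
Your proof is correct and follows essentially the same route as the paper: compute the peak and average power of the harvester input in each scenario and take the ratio, using $|x_{l,k}|\le 1$ for the peak and $\mathbb{E}[x_{l,k}^2]=1/2$ together with vanishing cross-terms for the average. The only cosmetic difference is that, for $\psi=2\beta$, you invoke the DCSK factorization $(1+d_l)\sum_k x_{l,k}$ explicitly (as the paper does later in Appendix~\ref{napp1}), whereas the paper's own PAPR proof instead cites $\mathbb{E}\{s_{l,i}s_{l,j}\}=0$ for $i\neq j$ directly; both lead to $\mathbb{E}\{(\sum_k s_{l,k})^2\}=\beta$, and your caveat about the asymptotic nature of this uncorrelatedness is simply the standard DCSK assumption the paper also adopts.
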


\begin{proof}
See Appendix \ref{app0}.
\end{proof}

\noindent Therefore, since high PAPR signals are desirable for WPT \cite{papr}, the correlator significantly enhances the WPT performance.

\subsection{Energy transfer model} \label{ehsec}

The WPT receiver is equipped with an antenna followed by a rectifier. The rectifier, which generally consists of a diode  (e.g., a Schottky diode) and a passive low pass filter, acts as an envelope detector \cite{envd} and therefore neglects the phase of the received signal $y(t)$. Based on the nonlinearity of this circuit, the output DC current is approximated in terms of $y(t)$ as \cite{harv}
\begin{equation} \label{brunoeh}
z_{\rm DC}=k_2R_{ant}\mathbb{E} \{ |y(t)|^2 \}+k_4R_{ant}^2\mathbb{E} \{ |y(t)|^4 \},
\end{equation}
where the parameters $k_2,k_4,$ and $R_{ant}$ are constants determined by the characteristics of the circuit. Note that the conventional linear model is a special case of this nonlinear model and can be obtained by considering only the first term in \eqref{brunoeh}. For the sake of presentation, we will use $\varepsilon_1 = r^{-\alpha}k_2R_{ant}P_t$ and $\varepsilon_2=r^{-2\alpha}k_4R_{ant}^2P_t^2$. Hence, from \eqref{corr} and \eqref{brunoeh}, we obtain
\begin{align} \label{zdef1}
z_{\rm MC}&=\varepsilon_1\mathbb{E}\left\lbrace \left(  |h|\sum\limits_{k=1}^{2\beta}s_k \right) ^2 \right\rbrace  + \varepsilon_2\mathbb{E}\left\lbrace \left(  |h|\sum\limits_{k=1}^{2\beta}s_k \right) ^4 \right\rbrace,
\end{align}
for the scenario with an analog correlator at the receiver, where the expectation is taken over both $|h|$ and $s_k$. On the other hand, the output DC without a correlator is
\begin{align} \label{zdef2}
z_{\rm MNC}&=\varepsilon_1\mathbb{E}\left\lbrace |h|^2\sum\limits_{k=1}^{2\beta}s_k^2 \right\rbrace  + \varepsilon_2\mathbb{E}\left\lbrace |h|^4\sum\limits_{k=1}^{2\beta}s_k^4 \right\rbrace,
\end{align}
where $2\beta$ samples are considered for fair comparison. This implies that $\beta$ will have a considerable impact on the harvested energy. Finally, note that the subscript MC and MNC refers to the scenario of modulated waveform transmission (DCSK in this case), with and without the correlator at the receiver, respectively.

\section{Chaotic Signal-based Wireless\\ Power Transfer} \label{harvsec}

In this section, we investigate the effect of DCSK signals on WPT. We analyze the performance of the proposed receiver architecture in terms of the harvested energy, when DCSK waveforms are transmitted. Towards this direction, we provide the following theorem.

\begin{theorem} \label{nakath1}
For a WPT receiver with a correlator, the harvested DC is given by
\begin{equation}  \label{theocnaka}
z_{\rm MC}=\varepsilon_1\beta+\varepsilon_2\frac{3(1+m)}{m}\beta(2\beta-1).
\end{equation}
\end{theorem}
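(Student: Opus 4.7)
The plan is to evaluate the two expectations in \eqref{zdef1} directly by exploiting the redundancy in the DCSK symbol, the mutual independence of the data bit $d_l$, the reference samples $\{x_k\}$, and the fading $|h|$, together with the Chebyshev invariant density and the Nakagami-$m$ moments.

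The first step is to simplify the inner sum. Using \eqref{sym}, for a single transmitted symbol the first $\beta$ samples of $s_k$ equal the reference $x_k$ while the next $\beta$ equal $d_l x_k$, so $\sum_{k=1}^{2\beta} s_k = (1+d_l)\sum_{k=1}^{\beta} x_k$. Raising this to the second and fourth powers and averaging over equiprobable $d_l\in\{\pm 1\}$ produces scalar factors $\mathbb{E}[(1+d_l)^2]=2$ and $\mathbb{E}[(1+d_l)^4]=8$, after which independence splits each expectation in \eqref{zdef1} into a product of a channel moment, a data-bit moment, and a chaotic-sum moment.

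Next, I would evaluate each factor and assemble. From the Nakagami-$m$ distribution with unit mean power, $\mathbb{E}[|h|^2]=1$ and $\mathbb{E}[|h|^4]=(m+1)/m$. From the Chebyshev invariant density $p(x)=1/(\pi\sqrt{1-x^2})$ on $[-1,1]$, I would obtain $\mathbb{E}[x]=\mathbb{E}[x^3]=0$, $\mathbb{E}[x^2]=1/2$, and $\mathbb{E}[x^4]=3/8$. Adopting the standard DCSK-analysis postulate that distinct chaotic samples are uncorrelated and that their joint moments factor as in the i.i.d.\ case, the second-order chaotic sum is $\mathbb{E}\bigl[(\sum_{k=1}^{\beta} x_k)^2\bigr]=\beta/2$, while expanding the fourth power multinomially and retaining only the ``all four indices equal'' and ``two disjoint pairs'' patterns (the odd-moment terms vanish) gives $\mathbb{E}\bigl[(\sum_{k=1}^{\beta} x_k)^4\bigr]=\beta\,\mathbb{E}[x^4]+3\beta(\beta-1)(\mathbb{E}[x^2])^2=3\beta(2\beta-1)/8$. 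Substituting everything into \eqref{zdef1} yields $\varepsilon_1\cdot 1\cdot 2\cdot(\beta/2)=\varepsilon_1\beta$ for the first term and $\varepsilon_2\cdot\tfrac{m+1}{m}\cdot 8\cdot\tfrac{3\beta(2\beta-1)}{8}=\varepsilon_2\tfrac{3(1+m)}{m}\beta(2\beta-1)$ for the second, which together reproduce \eqref{theocnaka}.

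I expect the fourth-order chaotic moment to be the main obstacle: one must justify the factorization of $\mathbb{E}[x_{k_1}x_{k_2}x_{k_3}x_{k_4}]$ across distinct indices (the usual appeal being to the mixing properties of the Chebyshev map that also motivate its use in DCSK) and correctly enumerate the $3\beta(\beta-1)$ ordered index patterns that contribute through paired second moments. Everything else reduces to direct integrations against the Chebyshev density and standard Nakagami-$m$ moment identities.
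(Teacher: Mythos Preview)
Your proposal is correct and follows essentially the same route as the paper's proof: both factor the DCSK sum as $(1+d)\sum_{k=1}^{\beta}x_k$, split each expectation into independent channel, data-bit, and chaotic-sum factors, and evaluate the second- and fourth-order chaotic moments via the multinomial expansion together with the Chebyshev invariant-density moments $\mathbb{E}[x^2]=1/2$, $\mathbb{E}[x^4]=3/8$ and the zero cross-correlation property. The paper handles the fourth-order factorization you flag by citing the known uncorrelatedness of Chebyshev-map sequences, which is exactly the justification you anticipate needing.
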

\begin{proof}
See Appendix \ref{napp1}.
\end{proof}

\noindent Theorem \ref{nakath1} provides a generalized closed-form expression for $z_{\rm MC}$ in terms of $m$. Note that $z_{\rm MC}$ corresponding to a no-fading scenario ($m \to \infty$) can also be obtained as a special case, given in the following corollary.

\begin{coro}
In a no-fading scenario, $z_{\rm MC}$ is given by
\begin{equation}
\lim\limits_{m \rightarrow \infty} z_{\rm MC}=\varepsilon_1\beta+3\varepsilon_2\beta(2\beta-1).
\end{equation}
\end{coro}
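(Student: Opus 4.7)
The plan is to simply take the limit $m \to \infty$ on the closed-form expression for $z_{\rm MC}$ already supplied by Theorem \ref{nakath1}, so the corollary reduces to a one-step evaluation rather than a fresh derivation. Since the Nakagami-$m$ distribution converges (in distribution) to a deterministic unit amplitude as $m \to \infty$, the no-fading regime is recovered precisely as this limit, which justifies using Theorem \ref{nakath1} directly instead of redoing the expectation-of-$|h|^2$ and $|h|^4$ computations from scratch.

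First I would isolate the only $m$-dependent factor in \eqref{theocnaka}, namely $\frac{3(1+m)}{m}$, and rewrite it as $3 + \tfrac{3}{m}$ so the limit is transparent. Next I would pass the limit through the sum (the first term $\varepsilon_1 \beta$ is independent of $m$) and observe that $\tfrac{3}{m} \to 0$, leaving $\tfrac{3(1+m)}{m} \to 3$. Substituting back gives the claimed expression $\varepsilon_1 \beta + 3 \varepsilon_2 \beta(2\beta-1)$.

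There is essentially no obstacle here: the corollary is a direct specialization of Theorem \ref{nakath1}, and the only subtlety worth a sentence is justifying that $m \to \infty$ in the Nakagami-$m$ model indeed corresponds to the no-fading (deterministic channel) scenario, which is standard. Consequently, the entire proof is a couple of lines of limit-taking, with no need to revisit the moment computations of $\sum_k s_{l,k}$ performed in Appendix \ref{napp1}.
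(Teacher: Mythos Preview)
Your proposal is correct and matches the paper's own justification exactly: the paper states that the corollary ``follows directly from Theorem \ref{nakath1}, as $(1+m)/m \to 1$ for $m\to\infty$.'' The extra remark about Nakagami-$m$ converging to a deterministic channel as $m\to\infty$ is a nice contextual comment but, as you note, not strictly needed once Theorem \ref{nakath1} is in hand.
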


The above corollary follows directly from Theorem \ref{nakath1}, as $(1+m)/m \to 1$ for $m\to\infty$.

\begin{remark}
Note that in Theorem \ref{nakath1}, $m=1$ results in an enhanced $z_{\rm MC}$ compared to the $m \rightarrow \infty$ scenario, i.e. wireless fading enhances WPT. This observation corroborates the claims made in \cite{fhelps} regarding the beneficial role of fading in WPT systems.
\end{remark}

Next, we consider the case of a conventional WPT receiver without the analog correlator, which is given by the following theorem.
\begin{theorem} \label{nakath2}
Without a correlator at the WPT receiver, the harvested DC is
\begin{equation}  \label{theonaka}
z_{\rm MNC}=\varepsilon_1\beta+\varepsilon_2\frac{3(1+m)}{4m}\beta.
\end{equation}
\end{theorem}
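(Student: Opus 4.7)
The plan is to start directly from the definition in \eqref{zdef2} and exploit the fact that, without the correlator, no cross terms between distinct chaotic samples arise, which makes the expectation substantially simpler than in Theorem \ref{nakath1}. Specifically, since the channel $h$ and the chaotic sequence $\{s_k\}$ are independent, I would factor each expectation as a product of a channel moment and a chaotic moment and interchange expectation with the finite sums:
\begin{align*}
z_{\rm MNC} = \varepsilon_1\,\mathbb{E}\{|h|^2\}\sum_{k=1}^{2\beta}\mathbb{E}\{s_k^2\} + \varepsilon_2\,\mathbb{E}\{|h|^4\}\sum_{k=1}^{2\beta}\mathbb{E}\{s_k^4\}.
\end{align*}

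Next I would evaluate each of the four moments. For the Nakagami-$m$ block fading channel normalized to unit mean power, $|h|^2$ is Gamma-distributed with shape $m$ and unit mean, so $\mathbb{E}\{|h|^2\}=1$ and $\mathbb{E}\{|h|^4\}=(m+1)/m$; these are exactly the same moments used in Appendix \ref{napp1}, and I would cite them from there rather than re-deriving. For the chaotic samples, I would use the well-known invariant density of the Chebyshev map, which yields $\mathbb{E}\{s_k^2\}=1/2$ and $\mathbb{E}\{s_k^4\}=3/8$, independent of $k$. Because the sums contain $2\beta$ identical terms, this immediately gives $\sum_{k=1}^{2\beta}\mathbb{E}\{s_k^2\}=\beta$ and $\sum_{k=1}^{2\beta}\mathbb{E}\{s_k^4\}=3\beta/4$.

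Substituting these moments back in produces
\begin{align*}
z_{\rm MNC} = \varepsilon_1\cdot 1\cdot \beta + \varepsilon_2\cdot\frac{m+1}{m}\cdot\frac{3\beta}{4} = \varepsilon_1\beta + \varepsilon_2\frac{3(1+m)}{4m}\beta,
\end{align*}
which is the claimed expression. There is no real obstacle here: unlike Theorem \ref{nakath1}, where expansion of $(\sum_k s_k)^2$ and $(\sum_k s_k)^4$ forces one to evaluate mixed moments $\mathbb{E}\{s_{k_1}s_{k_2}s_{k_3}s_{k_4}\}$ and account for the DCSK reference/data structure (giving rise to the $\beta(2\beta-1)$ factor), here the summands involve only the diagonal terms $s_k^2$ and $s_k^4$. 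The only point that needs a brief justification is the claim that the Chebyshev second- and fourth-order moments do not depend on the index $k$ or on whether the sample lies in the reference or the data half of the symbol; this follows because the data samples are obtained from the reference samples by multiplication with $d_l=\pm 1$, which leaves even-order moments invariant, and because the chaotic map is assumed to be in its stationary regime. With that observation in place, the computation is a direct substitution.
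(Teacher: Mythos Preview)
Your proof is correct and follows essentially the same approach as the paper: factor the channel and signal moments, use $\mathbb{E}\{|h|^2\}=1$, $\mathbb{E}\{|h|^4\}=(1+m)/m$, and the Chebyshev moments $\mathbb{E}\{s_k^2\}=1/2$, $\mathbb{E}\{s_k^4\}=3/8$, then sum over $2\beta$ identical terms. The only cosmetic difference is that the paper explicitly splits the sum into the reference half and the data half and factors out $(1+d^2)$ and $(1+d^4)$, whereas you simply note that multiplication by $d=\pm 1$ leaves even-order moments invariant; the content is the same.
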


\begin{proof}
See Appendix \ref{napp2}.
\end{proof}

\begin{remark}
From Theorems \ref{nakath1} and \ref{nakath2}, we observe that $z_{\rm MC}$ and $z_{\rm MNC}$ is a quadratic and linear function of $\beta$, respectively. This highlights the benefits of the proposed architecture based on the nonlinearity of the EH process. Note that, if a linear EH model is considered, which only accounts for the second-order term in \eqref{brunoeh}, we obtain $z_{\rm MC}=z_{\rm MNC}$.
\end{remark}

We observe that, both $z_{\rm MC}$ and $z_{\rm MNC}$ are inversely proportional to $m$. We also note that the analytical expressions obtained can be extended to Rician fading scenarios (as special cases), by replacing $m=\frac{(K+1)^2}{2K+1}$, where $K$ is the Rice factor \cite{rnaka}.

For the sake of completeness, we also provide $z_{\rm MNC}$ corresponding to a no-fading scenario.

\begin{coro}
The harvested DC $z_{\rm MNC}$ corresponding to a no-fading scenario is given by
\begin{equation} \label{propb1n}
\lim\limits_{m \rightarrow \infty} z_{\rm MNC}=\varepsilon_1\beta+\frac{3}{4}\varepsilon_2\beta.
\end{equation}
\end{coro}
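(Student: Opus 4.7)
The plan is to obtain the result by a direct limiting argument applied to the closed-form expression already established in Theorem \ref{nakath2}. Since Theorem \ref{nakath2} gives $z_{\rm MNC}=\varepsilon_1\beta+\varepsilon_2\frac{3(1+m)}{4m}\beta$ for arbitrary $m\geq 1$, no new probabilistic analysis is needed: the no-fading channel is simply the deterministic specialization of the Nakagami-$m$ family obtained by letting the shape parameter grow without bound (which concentrates $|h|^2$ at its unit mean). Hence the entire argument reduces to evaluating the scalar limit of the right-hand side.

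Concretely, I would split the dependence on $m$ across the two summands. The first term, $\varepsilon_1\beta$, contains no $m$ and therefore passes through the limit unchanged. For the second term I would rewrite the $m$-dependent factor as $\frac{1+m}{m}=1+\frac{1}{m}$ and observe that $1/m\to 0$ as $m\to\infty$, so that the bracket tends to $1$. Multiplying by the remaining constants $\tfrac{3}{4}\varepsilon_2\beta$ then yields precisely the claimed value $\varepsilon_1\beta+\tfrac{3}{4}\varepsilon_2\beta$.

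There is essentially no obstacle here, and in particular no hard step. The corollary plays the same bookkeeping role as the corollary stated immediately after Theorem \ref{nakath1}: it records the deterministic-channel specialization for ease of reference and to enable direct comparison with the correlator-aided expression, highlighting that the quadratic-in-$\beta$ term present in the correlator case collapses to a linear-in-$\beta$ contribution in the no-correlator case, which is the structural observation the paper wishes to underline.
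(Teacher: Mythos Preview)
Your proposal is correct and matches the paper's approach exactly: the paper does not give a separate argument for this corollary, treating it (like the companion corollary after Theorem~\ref{nakath1}) as an immediate consequence of $(1+m)/m\to 1$ as $m\to\infty$ applied to the closed form in Theorem~\ref{nakath2}.
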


\section{DCSK-based Waveform Design for WPT}  \label{info}

We now extend the analytical model, initially proposed for DCSK signals, to unmodulated (deterministic) chaotic waveforms. Then, we modify the conventional DCSK waveform at the transmitter, in a way that further enhances the WPT performance.

\begin{figure*}[!t]
\centering\includegraphics[width=0.76\linewidth]{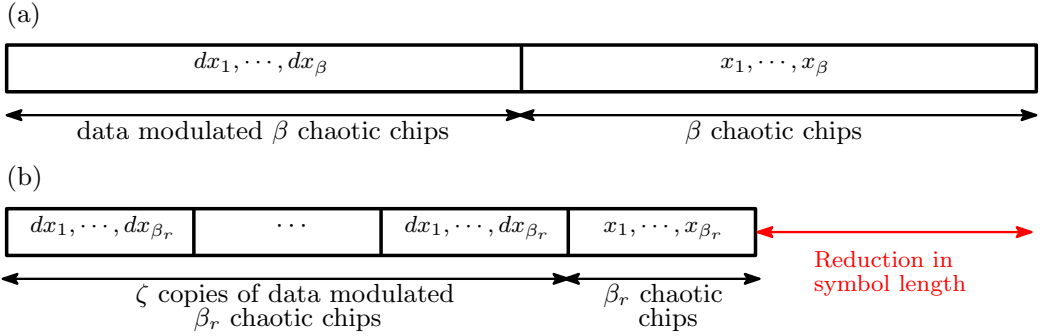}
\caption{(a) DCSK frame, and (b) SR-DCSK frame.}
\label{fig:sr}
\end{figure*}

\subsection{Unmodulated chaotic signals}

We consider the case of using a deterministic, unmodulated chaotic waveform for WPT and compare its performance with the conventional DCSK signal. For this case, the $k$-th bit of the $l$-th transmitted unmodulated chaotic symbol is given by
\begin{equation} \label{corr3}
s_{l,k}=x_{l,k}, \quad k=2(l-1)\beta+1,\dots,2l\beta.
\end{equation}
It is worth noting that while it is essential to have a symbol length of $2\beta$ in DCSK, such a constraint is not required in the unmodulated chaotic case. However, we consider a symbol length $2\beta$ for a fair comparison of the EH performance. Now, we state the following proposition, where we obtain the harvested DC for unmodulated chaotic signals, with/without a correlator at the WPT receiver.
\begin{prop} \label{prop2}
Based on the symbol structure as defined in \eqref{corr3}, we obtain
\begin{equation}    \label{corrnaka1}
z_{\rm UM,C}=\varepsilon_1\beta+\varepsilon_2\frac{3(1+m)}{m}\beta\left(\beta-\frac{1}{4}\right),
\end{equation}
\textit{with a correlator and,}
\begin{equation}    \label{corrnaka2}
z_{\rm UM,NC}=\varepsilon_1\beta+\varepsilon_2\frac{3(1+m)}{4m}\beta,
\end{equation}
\textit{without a correlator.}
\end{prop}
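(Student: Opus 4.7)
The plan is to mirror the derivations of Theorems \ref{nakath1} and \ref{nakath2}, substituting the unmodulated symbol structure $s_{l,k}=x_{l,k}$ from \eqref{corr3} into the DC expressions \eqref{zdef1} and \eqref{zdef2}. The key simplification relative to DCSK is that the bit variable $d_l$ no longer appears, so the aggregate sum over the $2\beta$ samples at the correlator output becomes $\sum_{k=1}^{2\beta}x_{l,k}$ rather than $(1+d_l)\sum_{k=1}^{\beta}x_{l,k}$. Hence the expectations over $d_l$ that drove the DCSK analysis are replaced by expectations over a chaotic sum of twice the length.

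For the no-correlator case \eqref{corrnaka2}, plugging $s_{l,k}=x_{l,k}$ into \eqref{zdef2} and separating the channel and signal expectations reduces the task to evaluating $\sum_{k=1}^{2\beta}\mathbb{E}\{x_k^2\}$ and $\sum_{k=1}^{2\beta}\mathbb{E}\{x_k^4\}$. Using the invariant-measure moments of the Chebyshev map, $\mathbb{E}\{x_k^2\}=1/2$ and $\mathbb{E}\{x_k^4\}=3/8$, together with the Nakagami-$m$ moments $\mathbb{E}\{|h|^2\}=1$ and $\mathbb{E}\{|h|^4\}=(m+1)/m$, the stated expression follows immediately. A natural sanity check is that $z_{\rm UM,NC}=z_{\rm MNC}$ from Theorem \ref{nakath2}, consistent with the paper's earlier observation that modulation does not affect the harvested DC in the absence of a correlator.

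For the correlator case \eqref{corrnaka1}, the work is concentrated in $\mathbb{E}\{(\sum_{k=1}^{2\beta}x_k)^2\}$ and $\mathbb{E}\{(\sum_{k=1}^{2\beta}x_k)^4\}$. The second moment collapses to $2\beta\,\mathbb{E}\{x_k^2\}=\beta$ using the same near-orthogonality of distinct chaotic samples exploited in the proof of Theorem \ref{nakath1}. For the fourth moment, expanding the multinomial gives a diagonal contribution $2\beta\,\mathbb{E}\{x_k^4\}$ and a squared-pair contribution $3\cdot 2\beta(2\beta-1)\,\mathbb{E}\{x_k^2\}^2$, while all odd-order cross-terms vanish under the symmetric Chebyshev invariant measure. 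Simplifying yields $3\beta(\beta-1/4)$, and combining with $\mathbb{E}\{|h|^4\}=(m+1)/m$ produces \eqref{corrnaka1}.

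The principal obstacle, inherited from Theorem \ref{nakath1}, is the justification of the vanishing of the odd and mixed cross-terms in the fourth-power expansion; this rests on the weak-dependence/decorrelation properties of successive Chebyshev samples. Once this combinatorial simplification is granted, the remainder is a routine substitution, and the doubling of the effective sum length (from $\beta$ in the DCSK analysis to $2\beta$ here) is precisely what turns the factor $\beta(2\beta-1)$ of Theorem \ref{nakath1} into the factor $\beta(\beta-1/4)$ appearing in \eqref{corrnaka1}.
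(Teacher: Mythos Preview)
Your proposal is correct and follows essentially the same route as the paper's own proof: substitute $s_{l,k}=x_{l,k}$ into \eqref{zdef1} and \eqref{zdef2}, factor the Nakagami moments $\mathbb{E}\{|h|^2\}=1$ and $\mathbb{E}\{|h|^4\}=(1+m)/m$, and evaluate the second and fourth moments of $\sum_{k=1}^{2\beta}x_k$ via the multinomial expansion together with $\mathbb{E}\{x_k^2\}=1/2$, $\mathbb{E}\{x_k^4\}=3/8$, and the vanishing of odd cross-moments. The only cosmetic difference is that the paper writes the squared-pair count as $6\beta(2\beta-1)$ (unordered pairs) whereas you write it as $3\cdot 2\beta(2\beta-1)$ (ordered pairs), which of course coincide.
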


\begin{proof}
See Appendix \ref{napp3}.
\end{proof}

\begin{remark} \label{rem}
Without a correlator at the receiver, we have $z_{\rm MNC}=z_{\rm UM,NC}$, for a given set of system parameters (Theorem \ref{nakath2} and Proposition \ref{prop2}). On the other hand, when a correlator is employed, we have $z_{\rm MC} > z_{\rm UM,C}$, with a performance gain equal to $\varepsilon_2\frac{3(1+m)}{m}\beta(\beta-\frac{3}{4})$ (Theorem \ref{nakath1} and Proposition \ref{prop2}). In other words, a modulated DCSK signal positively affects the EH performance in the presence of a correlator.
\end{remark}

The above remark can be explained as follows. In contrast to the unmodulated waveform, DCSK also carries information due to its inherent randomness, i.e. the first $\beta$ symbols are chaotic, followed by their replica or inverted replica. In case of a correlator-aided receiver, this randomness of the DCSK waveform induces fluctuations of the transmit signal amplitude, which boosts the harvested DC. The difference in performance, given in Remark \ref{rem}, is in line with the claims made in \cite{hparam} regarding the importance of using modulated multisine waveforms for efficient WPT.

Note that Remark \ref{rem} considers identical system parameters and similar channel conditions between the two cases. However, this claim does not hold for all possible scenarios. Indeed, an interesting insight can be obtained when we analyze the energy harvested in the modulated case under the deterministic fading scenario. Specifically, let $m_1$ and $m_2$ be the fading parameters of the modulated and unmodulated case, respectively. Then, the performance gap $\Delta = z_{\rm MC}-z_{\rm UM,C}$ can be written as
\begin{equation}  \label{gap}
\Delta = \varepsilon_2\frac{3(1+m_1)}{m_1}\beta(2\beta-1)-\varepsilon_2\frac{3(1+m_2)}{m_2}\beta\left(\beta-\frac{1}{4}\right).
\end{equation}
Observe that for the special case of $m_1=m_2$, we always have $\Delta >0$, i.e. $z_{\rm MC}>z_{\rm UM,C}$ (Remark \ref{rem}). On the other hand, $z_{\rm MC}$ in a no-fading environment ($m_1 \rightarrow \infty$) and $z_{\rm UM,C}$ in a Rayleigh fading scenario ($m_2=1$) results in $\Delta=-\frac{3}{2}\varepsilon_2\beta<0$.
This implies that, with a correlator-aided receiver, an unmodulated waveform in a Rayleigh fading environment results in more harvested energy as compared to its modulated counterpart in a no-fading environment.

Furthermore, we also observe from \eqref{gap} that for finite $m_1$ and $m_2$, $\Delta >0$ holds if and only if
\begin{align}  \label{ineq}
\frac{(1+m_1)}{m_1}(2\beta-1)-\frac{(1+m_2)}{m_2}\left(\beta-\frac{1}{4}\right)>0.
\end{align}
Note that $m_1$ and $m_2$ are necessarily channel parameters, i.e. they cannot be controlled. However, it is possible to tune $\beta$ accordingly to satisfy \eqref{ineq}. After some trivial algebraic manipulations, we obtain $\beta > \beta_{\rm opt}^+$, where
\begin{equation}  \label{bopt}
\beta_{\rm opt} = \frac{\left(\frac{1+m_1}{m_1}\right)-\frac{1}{4}\left(\frac{1+m_2}{m_2}\right)}{2\left(\frac{1+m_1}{m_1}\right)-\left(\frac{1+m_2}{m_2}\right)}=\frac{1}{4}\left( \frac{4m_2+3m_1m_2-m_1}{2m_2+m_1m_2-m_1} \right)
\end{equation}
and $x^+=\max \{x,0\}$.

Thus, it is important to note that $\Delta >0$ does not hold for all values of $\beta$ and in some scenarios, we have $\Delta<0$, i.e. it is possible to obtain more harvested energy from an unmodulated waveform as compared to its modulated counterpart.

\subsection{WPT optimal DCSK-based waveform}    \label{optframe}

In Section \ref{harvsec}, we showed that a communication-based chaotic waveform with the proposed architecture boosts the WPT performance. The drawback of this approach is a long symbol duration, which also implies that a large number of chaotic chips are generated for every single symbol transmission. To this end, a shorter symbol duration may address
both these problems and accordingly, a short reference DCSK (SR-DCSK) is proposed in \cite{srdcsk}.

An unmodulated chaotic component of length $\beta_r<\beta$ is considered in SR-DCSK, followed by $\zeta$ copies of its replica, multiplied with the information such that $\beta=\zeta\beta_r$. Hence, the symbol duration is $\beta_r(1+\zeta)=\beta_r+\beta<2\beta$, i.e. the transmitted symbol is characterized by a sequence of $\beta_r+\beta$ samples of the chaotic basis signal, which is represented as
\begin{align}  \label{symsr}
s_k=\begin{cases} 
x_k, & 0 < k \leq \beta_r,\\
dx_{k-\beta_r}, & \beta_r < k \leq \beta_r+\beta,
\end{cases}&
\end{align}
\noindent where $d$ is the transmitted information bit. An illustrative comparison of DCSK and SR-DCSK is demonstrated in Fig. \ref{fig:sr}. While Fig. \ref{fig:sr}(a) shows the conventional DCSK symbol structure, the short reference SR-DCSK is depicted in Fig. \ref{fig:sr}(b). It is important to note here that the SR-DCSK is a technique proposed mainly for the purpose of efficient information transfer, such that its error performance always remains above a certain acceptable threshold. In contrast, we investigate the impact of having a shorter symbol duration on WPT and accordingly we propose a WPT optimal DCSK-based waveform.

\begin{figure*}[!t]
\begin{subnumcases}{\label{beff} z_{\rm SR}=} 
\frac{1}{2}\varepsilon_1\beta^2+\frac{3(1+m)}{8m}\varepsilon_2\beta^4, & $\beta_r=0,$ \label{beff1}\\
\varepsilon_1 \frac{\beta_r^2+\beta^2}{2\beta_r}+\varepsilon_2\frac{3(1+m)}{8m}\left( 1+6\zeta^2+\zeta^4 \right) \left( 2\beta_r^2-\beta_r \right), & $\beta_r>0.$ \label{beff2}
\end{subnumcases}
\hrulefill
\end{figure*}

Given that $\zeta$ replicas of the $\beta_r$ length reference signal are concatenated in SR-DCSK, this implies that the reference signal is partially correlated over $\zeta$ consecutive samples of the frame, i.e. the degree of `randomness' increases within a single frame, as compared to DCSK. In other words, this results in an increased correlation within the frame, which leads to enhanced WPT. This is in line with the claim made in \cite{corrl} regarding the effect of correlation in transmitted signals with respect to WPT.

Hence, we investigate SR-DCSK to obtain a WPT-optimal frame structure. As the symbol structure in \eqref{symsr} is dependent on $\beta_r$, the two extreme cases are $\beta_r=0$ and $\beta_r=\beta$. Accordingly, the WPT performance for all the possible values of $\beta_r$ is given by the following theorem.

\begin{theorem}    \label{nakath3}
A $\beta_r$-SR-DCSK results in $z_{\rm SR}$ given by \eqref{beff}.
\end{theorem}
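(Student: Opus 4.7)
The plan is to specialize the correlator-output harvested-DC formula \eqref{zdef1}, with the sum running over the new symbol length $\beta_r+\beta$ in place of $2\beta$, to the SR-DCSK structure \eqref{symsr}. The key algebraic observation driving the entire derivation is that, since $\beta=\zeta\beta_r$, the data block consists of exactly $\zeta$ cyclic copies of the reference, so
\begin{equation*}
\sum_{k=1}^{\beta_r+\beta}s_k \;=\; \sum_{i=1}^{\beta_r}x_i \;+\; d\sum_{j=1}^{\zeta}\sum_{i=1}^{\beta_r}x_i \;=\; (1+d\zeta)\,S, \qquad S=\sum_{i=1}^{\beta_r}x_i.
\end{equation*}
This collapses the second- and fourth-order correlator moments into products $\mathbb{E}\{|h|^{2p}\}\,\mathbb{E}\{(1+d\zeta)^{2p}\}\,\mathbb{E}\{S^{2p}\}$ for $p=1,2$, by independence of $|h|$, $d$, and the chaotic samples.

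Next, I would evaluate each of the three factors in turn. For equiprobable $d\in\{\pm1\}$ the odd powers of $d$ vanish, so $\mathbb{E}\{(1+d\zeta)^2\}=1+\zeta^2$ and $\mathbb{E}\{(1+d\zeta)^4\}=1+6\zeta^2+\zeta^4$. The Nakagami-$m$ moments $\mathbb{E}\{|h|^2\}=1$ and $\mathbb{E}\{|h|^4\}=(1+m)/m$ are the same ones already used in the proofs of Theorems~\ref{nakath1} and~\ref{nakath2}. For the chaotic factor I would invoke the standard Chebyshev-map statistics $\mathbb{E}\{x_i\}=0$, $\mathbb{E}\{x_i^2\}=1/2$, $\mathbb{E}\{x_i^4\}=3/8$, together with the uncorrelatedness of distinct samples alluded to in Section~\ref{SM}. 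The second moment then gives $\mathbb{E}\{S^2\}=\beta_r/2$ directly. For $\mathbb{E}\{S^4\}$ I would expand the quadruple sum and enumerate the non-vanishing index-coincidence patterns: the $\beta_r$ all-four-equal terms contribute $3\beta_r/8$, the three pairing partitions contribute $3\beta_r(\beta_r-1)/4$, and any pattern containing an isolated index vanishes, yielding $\mathbb{E}\{S^4\}=3(2\beta_r^2-\beta_r)/8$.

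Multiplying the three factors together and using the identity $\beta_r^2(1+\zeta^2)=\beta_r^2+\beta^2$ then produces the $\beta_r>0$ branch \eqref{beff2}. For the degenerate branch \eqref{beff1}, I would treat $\beta_r=0$ as the limit in which the reference disappears and the symbol reduces to $\beta$ identical copies of a single chaotic chip $dx$, so that $\sum_k s_k=\beta\,d\,x$. The moment calculation simplifies to $\mathbb{E}\{(\beta dx)^2\}=\beta^2/2$ and $\mathbb{E}\{(\beta dx)^4\}=3\beta^4/8$; substituting these, along with the Nakagami moments, into the two-term expansion of \eqref{zdef1} recovers \eqref{beff1}.

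The main obstacle will be the fourth-moment combinatorics for $\mathbb{E}\{S^4\}$: one has to enumerate the index partitions of four indices carefully, verify that every pattern with a singleton index vanishes under $\mathbb{E}\{x_i\}=0$, and rely on the customary uncorrelatedness assumption for distinct Chebyshev samples so that unmatched-index expectations factor cleanly. Beyond that, the $\beta_r=0$ branch does not arise from the generic identity $\sum_k s_k=(1+d\zeta)S$ and must be treated as a separate degenerate configuration, which is a conceptual rather than computational subtlety.
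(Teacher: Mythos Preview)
Your proposal is correct and follows essentially the same route as the paper's proof: the same factorization $\sum_k s_k=(1+d\zeta)\sum_{i=1}^{\beta_r}x_i$, the same moment computations $\mathbb{E}\{(1+d\zeta)^{2}\}=1+\zeta^2$, $\mathbb{E}\{(1+d\zeta)^{4}\}=1+6\zeta^2+\zeta^4$, $\mathbb{E}\{S^2\}=\beta_r/2$, $\mathbb{E}\{S^4\}=\tfrac{3}{8}(2\beta_r^2-\beta_r)$, and the same separate handling of the degenerate $\beta_r=0$ branch via a single repeated chip. Your enumeration of the fourth-moment index partitions is in fact stated more explicitly than in the paper, but the underlying argument is identical.
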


\begin{proof}
See Appendix \ref{napp4}.
\end{proof}
\noindent It is interesting to note from \eqref{beff}, that $\beta_r$-SR-DCSK converges to DCSK for $\beta_r=\beta.$ We observe that the increased degree of `randomness' in $\beta_r$-SR-DCSK results in a significant performance enhancement in terms of WPT, with an increase of $z_{\rm SR}$ proportional to $\beta^4$, unlike $z_{\rm MC}$ in Theorem \ref{nakath1}.

\begin{prop} \label{prop3}
For a given $\beta$, the optimal $z_{\rm SR}$ is achieved at $\beta_r=1$.
\end{prop}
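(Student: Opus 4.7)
The plan is to compare $z_{\rm SR}(\beta_r=1)$ with $z_{\rm SR}(\beta_r)$ for every other admissible $\beta_r$ (namely $\beta_r=0$ and each divisor $\beta_r$ of $\beta$ with $2\le\beta_r\le\beta$), via direct algebraic manipulation of the two branches \eqref{beff1}--\eqref{beff2}. The $\beta_r=0$ case is immediate: substituting $\beta_r=1$ (so $\zeta=\beta$) into \eqref{beff2} and subtracting \eqref{beff1} yields
\[
z_{\rm SR}|_{\beta_r=1}-z_{\rm SR}|_{\beta_r=0}=\tfrac{\varepsilon_1}{2}+\tfrac{3(1+m)}{8m}\varepsilon_2(1+6\beta^2)>0,
\]
so no maximum can sit at $\beta_r=0$.

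For $\beta_r\ge 2$, I would view \eqref{beff2} as a function of a continuous $\beta_r\in[1,\beta]$ (with $\zeta=\beta/\beta_r$) and decompose $z_{\rm SR}(1)-z_{\rm SR}(\beta_r)$ into its $\varepsilon_1$- and $\varepsilon_2$-contributions. The $\varepsilon_1$ piece collapses, after clearing denominators, to $\varepsilon_1\frac{(\beta_r-1)(\beta^2-\beta_r)}{2\beta_r}\ge 0$ on $[1,\beta]$. The $\varepsilon_2$ piece, upon substituting $\zeta=\beta/\beta_r$ and grouping by powers of $\beta$, becomes
\[
(1+\beta_r-2\beta_r^2)+6\beta^2(1/\beta_r-1)+\beta^4(1-2/\beta_r^2+1/\beta_r^3)
\]
times the positive constant $\tfrac{3(1+m)\varepsilon_2}{8m}$. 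Here I would invoke the identities $2\beta_r^2-\beta_r-1=\beta_r(2\beta_r+1)(1-1/\beta_r)$ and $1-2/\beta_r^2+1/\beta_r^3=(1-1/\beta_r)(1+1/\beta_r-1/\beta_r^2)$ to pull out the common non-negative factor $(1-1/\beta_r)$, reducing the inequality to the elementary polynomial bound
\[
\beta^4(1+1/\beta_r-1/\beta_r^2)\ge 6\beta^2+\beta_r(2\beta_r+1),
\]
which on $2\le\beta_r\le\beta$ reduces in turn to $\beta^4\ge 8\beta^2+\beta$, comfortably valid in the practically relevant regime $\beta\ge 3$ for which the SR-DCSK framework is used.

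The main obstacle is spotting the common $(1-1/\beta_r)$ factorisation in the $\varepsilon_2$ piece; prior to this manipulation one sees only a sum of mixed-sign terms whose sign is not obvious. Once the factorisation is in place, the closing polynomial check is trivial. An equivalent alternative, if one prefers to avoid the algebraic trick, is to differentiate \eqref{beff2} in $\beta_r$ directly and show $dz_{\rm SR}/d\beta_r\le 0$ on $[1,\beta]$, forcing the discrete maximum over the divisor set of $\beta$ to occur at the left endpoint $\beta_r=1$.
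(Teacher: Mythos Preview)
Your approach---decomposing $z_{\rm SR}(1)-z_{\rm SR}(\beta_r)$ into its $\varepsilon_1$- and $\varepsilon_2$-pieces and checking that each is nonnegative---is exactly the strategy the paper uses, except that the paper's proof is a single sentence (``by comparing the second and fourth order terms of the EH process separately \dots'') with no algebra supplied. You have filled in that algebra, and in doing so you correctly uncover the restriction $\beta\ge 3$: at $\beta=2$ the fourth-order factor $(1+6\zeta^2+\zeta^4)(2\beta_r^2-\beta_r)$ equals $41$ at $\beta_r=1$ but $48$ at $\beta_r=\beta=2$, so the term-by-term comparison genuinely fails there, a subtlety the paper's one-liner does not address. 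Your caveat is thus a feature of a more careful argument, not a gap.
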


\begin{proof}
We know that $\beta=\zeta\beta_r$ and $\beta \in \mathbb{Z}^+$. Hence, by comparing the second and fourth order terms of the EH process separately in Theorem \ref{nakath3}, we can state that $z_{\rm SR}$ is maximum for $\beta_r=1$, with all other system parameters remaining constant.
\end{proof}

\noindent Hence, the WPT-optimal transmitted symbol of $(\beta+1)$ duration is defined as
\begin{align}  \label{optsym}
s_k=\begin{cases} 
x_k, & k=1,\\
dx_1, & 1 < k \leq \beta+1.
\end{cases}
\end{align}
The corresponding harvested DC
\begin{equation} \label{zopt}
z_{\rm SR}^{\rm opt}=\frac{1}{2}\varepsilon_1(1+\beta^2)+\frac{3(1+m)}{8m}\varepsilon_2(1+6\beta^2+\beta^4),
\end{equation}
is obtained by replacing $\beta_r=1$ in \eqref{beff2}. It is further interesting to observe that having a single chip of chaotic signal followed by $\beta$ chips of information modulated chaos in a symbol of $\beta+1$ duration results in a maximum possible correlation; this is reflected in the WPT gain of the proposed DCSK-based waveform. Therefore for large $\beta$, the reduction in symbol duration is approximately $2\beta-(\beta+1)\approx \beta$ per symbol duration; this also suggests that one chaotic chip needs to be generated per symbol compared to $\beta$ chaotic chips per symbol in DCSK. Finally, a shorter symbol duration also signifies the reduced receiver's complexity.

Now, we compare the proposed chaotic waveform with the existing multisine signal-based EH framework \cite{wdesg}. We know that for a $N$-tone multisine signal, the PAPR at the harvester input is $2N$ \cite{papr}. On the contrary, we have proved in Proposition $1$, that the PAPR of a DCSK-based signal is $4\beta$. As a high PAPR translates to a better EH performance \cite{papr}, we can intuitively say that the proposed waveform will result in a higher harvested energy, even for a set of comparable $N$ and $\beta$. Furthermore, for an $N$-tone multisine signal, the linear term of the harvested DC is independent of $N$ and the nonlinear term is linearly dependent on $N$ \cite{wdesg}. On the contrary, in the case of the proposed $\beta_r$-SR-DCSK waveform, the linear and nonlinear terms of $z_{\rm SR}^{\rm opt}$ (Eq. \eqref{zopt}) are proportional to $\beta^2$ and $\beta^4$, respectively. As a result, the proposed architecture for $\beta_r$-SR-DCSK significantly outperforms multisine waveforms, in terms of WPT.
\begin{table*} [!t] 
\begin{center}
  \caption{Summary of results.}
\resizebox{0.88\textwidth}{!}{%
{\renewcommand{\arraystretch}{2} 
  \begin{tabular}{|c||c|c|c|c|}
    \hline
    \bf{Waveform} & \bf{Modulation} & \bf{Correlator} & \bf{Harvested DC}\\
    \hline\hline
    $z_{\rm MC}$  & DCSK & $\checkmark$ & $\displaystyle\varepsilon_1\beta+\varepsilon_2\frac{3(1+m)}{m}\beta(2\beta-1)$ \\
    \hline
    $z_{\rm MNC}$  & DCSK & $\times$ & $\displaystyle\varepsilon_1\beta+\varepsilon_2\frac{3(1+m)}{4m}\beta$\\
    \hline
    $z_{\rm UM,C}$  & $\times$& $\checkmark$ & $\displaystyle\varepsilon_1\beta+\varepsilon_2\frac{3(1+m)}{m}\beta\left(\beta-\frac{1}{4}\right)$\\
    \hline
    $z_{\rm UM,NC}$  & $\times$& $\times$ & $\displaystyle\varepsilon_1\beta+\varepsilon_2\frac{3(1+m)}{4m}\beta$\\
    \hline
    $z_{\rm SR}$  & $\beta_r$-SR-DCSK & $\checkmark$ &  $\begin{aligned}[t]
&\displaystyle\frac{1}{2}\varepsilon_1\beta^2+\frac{3(1+m)}{8m}\varepsilon_2\beta^4, \qquad\qquad\qquad\qquad\qquad\qquad\quad \beta_r=0,\\
& \displaystyle\varepsilon_1 \frac{\beta_r^2+\beta^2}{2\beta_r}+\varepsilon_2\frac{3(1+m)}{8m}\left( 1+6\zeta^2+\zeta^4 \right) \left( 2\beta_r^2-\beta_r \right), \quad \beta_r>0
\end{aligned}$\\
\hline
  \end{tabular}
  }
  }
  \label{tab:summ}
\end{center}
\end{table*}
Finally, Table \ref{tab:summ}  presents a summary of the main analytical results derived in Section \ref{harvsec} and \ref{info}.

\begin{remark}
It may appear from the analytical results in Table \ref{tab:summ} that the limiting case of $\beta \rightarrow \infty$ will result in an infinite amount of harvested energy. However, note that the energy transfer model considered in this work is based on the assumption that the harvester operates in the nonlinear region \cite{wdesg}. If $\beta$ becomes too large, the diode inside the harvester will be forced into the saturation region of operation, making the derived analytical results inapplicable.
\end{remark}

\section{Numerical Results}

We validate our theoretical analysis through extensive Monte-Carlo simulations. Without any loss of generality, we consider a transmission power of $P_t=30$ dBm, a Tx-Rx distance $r=20$ m, and a pathloss exponent $\alpha=4$. The parameters considered for the EH model are taken as $k_2=0.0034,k_4=0.3829,$ and $R_{ant}=50$ $\Omega$ \cite{hparam}. Also recall that, $\psi=2\beta$ and $\psi=1$ corresponds to a WPT receiver with and without the analog correlator, respectively.

\begin{figure}[!t]
	\centering\includegraphics[width=\linewidth]{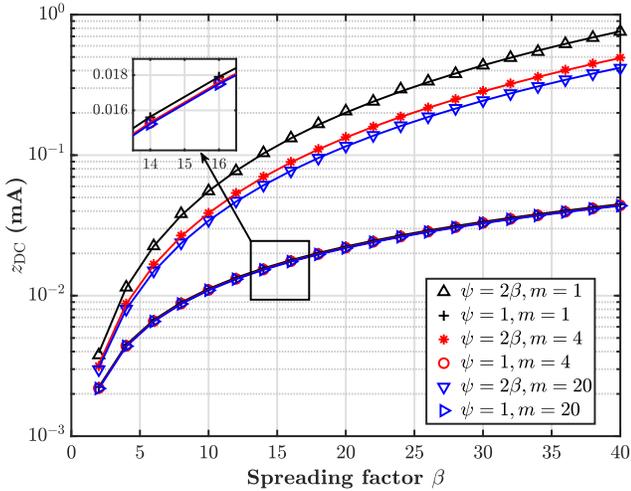}
	\caption{Effect of the spreading factor on $z_{\text{DC}}$; lines correspond to analysis and markers correspond to simulation results.}
	\label{fig:naka_comp}
\end{figure}

\subsection{Performance evaluation of the proposed architecture}\label{result1}
Fig. \ref{fig:naka_comp} depicts the WPT performance of DCSK with respect to the spreading factor, with and without the correlator at the receiver. Firstly, we observe the significant gains in WPT performance achieved with the employment of the correlator; the theoretical results (lines) match very closely with the simulation results (markers); this verifies our proposed analytical framework. This is related to the high PAPR, which is a function of the spreading factor $\beta$ as stated in Proposition \ref{prp1}. Moreover, we observe that, the harvested energy decreases with increasing $m$ for a given $\beta$. This observation is inline with \eqref{theocnaka} and \eqref{theonaka}, where we note that, the harvested DC varies with the quantity $\frac{1+m}{m}$, while the other system parameters remain constant. Specifically, due to this ratio, the best performance is achieved with $m=1$, whereas the worst is obtained with $m\to\infty$. Moreover, the loss in performance between consecutive values of $m$ decreases as $m$ increases; for example, observe the performance gap between $m=1$, $m=4$, and $m=20$.

\subsection{Effects of data modulation on WPT}
Fig. \ref{fig:unmodcomp} illustrates the impact of using a modulated waveforms for WPT and compares the performance with unmodulated chaotic signals in terms of WPT. The figure demonstrates that data modulation does not affect the WPT performance without the correlator at the receiver, i.e. both DCSK and unmodulated chaotic symbols result in an identical harvested DC. The only difference is the degradation in WPT performance with increasing $m$, which has been discussed earlier in Fig. \ref{fig:naka_comp}. On the other hand, we observe that modulating the chaotic symbols, it has a significant impact on the WPT performance. Finally, it is interesting to note that the harvested DC corresponding to Rayleigh fading, i.e. $m=1$, with unmodulated chaotic symbols is approximately identical to the harvested DC obtained with modulated symbols and $m=10$. This verifies \eqref{gap}, where in this case, the performance gap can be obtained by letting $m_1=10$ and $m_2=1$.

\begin{figure}[!t]
\centering\includegraphics[width=\linewidth]{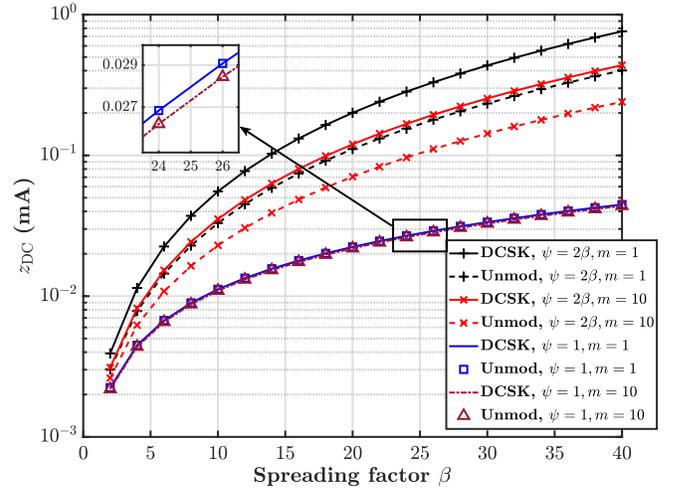}
\caption{Effect of the modulation on $z_{\text{DC}}$.}
\label{fig:unmodcomp}
\end{figure}

\subsection{Effects of symbol length on WPT}

\begin{figure}[!t]
\centering\includegraphics[width=\linewidth]{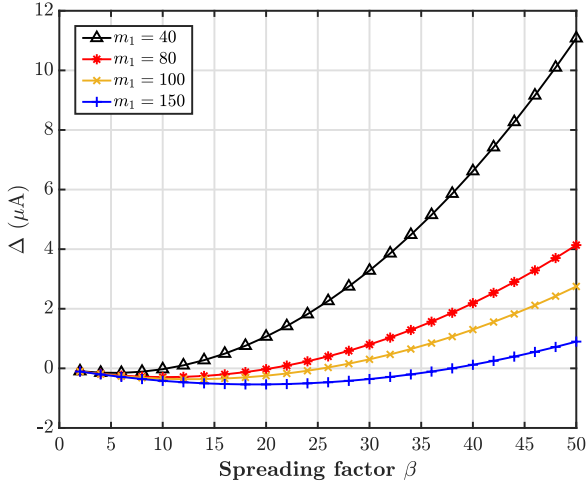}
\caption{Effect of the symbol length on $z_{\text{DC}}$.}
\label{fig:bound}
\end{figure}

Fig. \ref{fig:bound} depicts the importance of the choice of $\beta$ as a function of the channel parameter $m$. We consider the WPT performance of DCSK as a function of $\beta$, for various fading scenarios $m_1=40,80,100,150$ and compare it with its unmodulated counterpart in a Rayleigh fading scenario $(m_2=1)$. Recall that a positive $\Delta$ implies that DCSK performs better in terms of WPT and vice-versa. Observe that for all values of $m_1$, the parameter $\Delta$ attains a positive value only beyond a certain $\beta$, i.e. the unmodulated waveform performs better till that point and this value of $\beta$ increases with increasing $m_1$; this verifies our analysis to obtain \eqref{bopt}. Hence, we can rightly state that modulating the chaotic waveform does not guarantee enhanced EH. Even in the proposed architecture, i.e. with a correlator at the receiver, the choice of $\beta$ should be based on both the channel condition and the modulation, to guarantee a better EH performance.

\subsection{Effects of a shorter reference on WPT}

\begin{figure}[!t]
\centering\includegraphics[width=\linewidth]{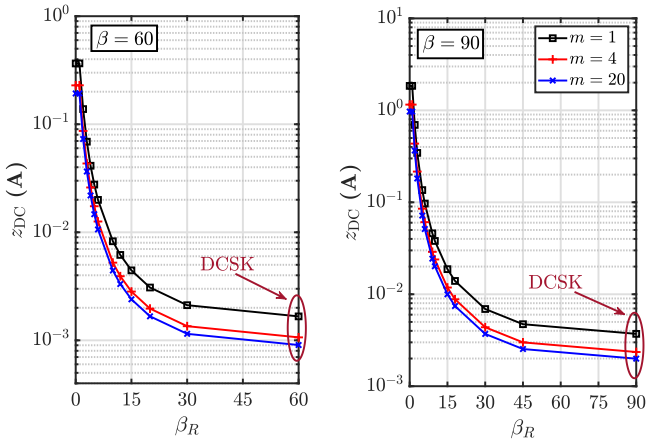}
\caption{Effect of a short reference on WPT; lines correspond to analysis and markers correspond to simulation results.}
\label{fig:zeta}
\end{figure}

Fig. \ref{fig:zeta} demonstrates the effect of an SR symbol length on the WPT performance of the proposed correlator-based architecture. The effect of having a reference length $\beta_r<\beta$, with $\zeta$ copies of its data modulated replica concatenated together such that $\beta=\zeta\beta_r$, is demonstrated here, where we consider two cases of $\beta=60$ and $\beta=90$, respectively. This figure illustrates the impact of correlation in enhancing chaotic signal-based WPT. We observe that for a given $\beta$, increasing $\beta_r$ results in obtaining the maximum harvested DC at $\beta_r=1$, followed by a monotonically decreasing performance. This is justified by the fact that as we have $\beta=\zeta\beta_r$, a higher $\beta_r$ implies a lower $\zeta$ for a given $\beta$, i.e. the correlation decreases, which results in deteriorating WPT performance. Finally, we observe that, for the special case of $\beta_r=\beta$, the WPT performance coincides with that of the DCSK-based transmission.

\subsection{Evaluation of WPT-optimal SR-DCSK}

Fig. \ref{fig:wptopt} demonstrates the performance of the proposed WPT-optimal SR-DCSK. The figure exhibits the effect of the spreading factor for two scenarios of Tx-Rx distance $20$ m and $30$ m, respectively. While the analytically obtained values match closely with that of Monte Carlo simulation, we also note that the harvested DC with Tx-Rx distance $30$ m is less compared to the harvested DC with Tx-Rx distance $20$ m; this is intuitive due to the path-loss factor.

\begin{figure}[!t]
\centering\includegraphics[width=\linewidth]{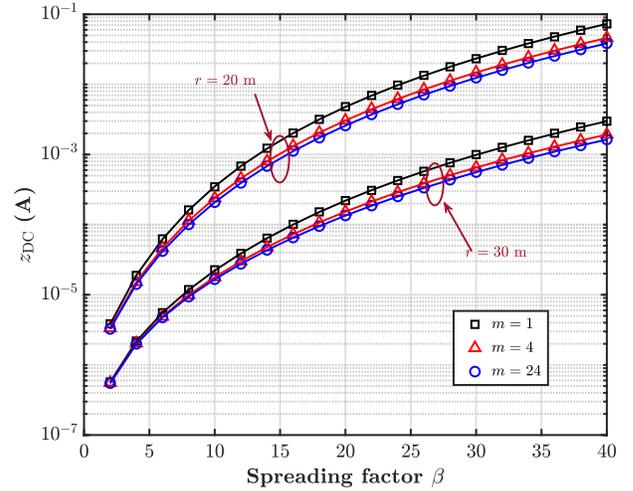}
\caption{Performance of WPT-optimal SR-DCSK; lines correspond to analysis and markers correspond to simulation results.}
\label{fig:wptopt}
\end{figure}

Fig. \ref{fig:joint} illustrates the joint effect of fading parameter $m$ and spreading factor $\beta$ on the WPT performance of the proposed symbol structure. We observe that the performance of WPT-optimal SR-DCSK improves with increasing $\beta$ and decreasing $m$ for a given Tx-Rx distance, where the harvested DC is maximum at $m=1$ and $\beta=30$, i.e. minimum $m$ and maximum $\beta$. We also note that even though the harvested energy increases monotonically with $\beta$ for a given $m$, its variation against $m$ for any particular $\beta$ is not the same. The harvested energy saturates with increasing $m$ for a given $\beta$; this observation can be justified by the fact that apart from $\beta$, it is also a function of $\frac{1+m}{m}$.

\begin{figure}[!t]
\centering\includegraphics[width=\linewidth]{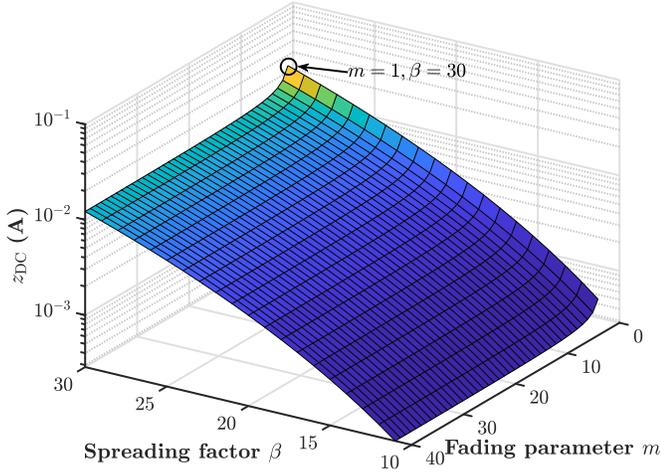}
\caption{Joint effect of the spreading factor $\beta$ and the fading parameter $m$ on WPT-optimal SR-DCSK.}
\label{fig:joint}
\end{figure}

\subsection{Comparison of WPT-optimal SR-DCSK and DCSK}

Fig. \ref{fig:dcsk_comp_sr} compares the performance of the proposed WPT-optimal SR-DCSK and DCSK against the transmit symbol length for several values of the fading parameter $m$. As expected, the proposed waveform results in a significantly higher harvested DC at approximately half the corresponding DCSK symbol length. A small symbol length implies less number of chaotic signals  generated per symbol transmission and also a smaller number of delay blocks inside the analog correlator, i.e. reduction in both receiver's complexity and cost. Besides having a short frame duration, the proposed frame structure successfully exploits the effect of correlation in the transmitted signal.

\begin{figure}[!t]
\centering\includegraphics[width=\linewidth]{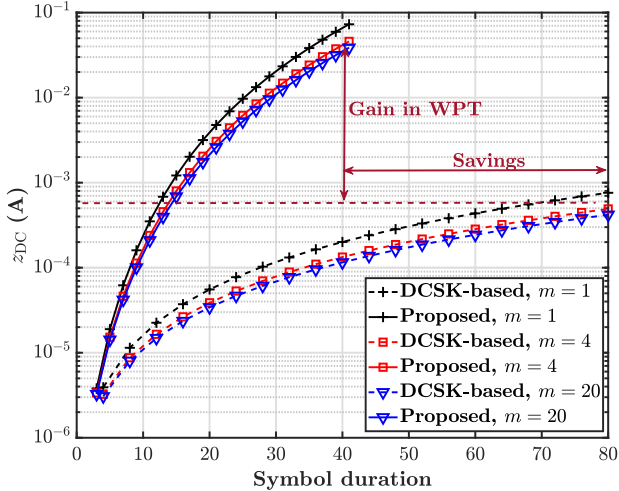}
\caption{Performance comparison of the proposed WPT-optimal SR-DCSK and DCSK.}
\label{fig:dcsk_comp_sr}
\end{figure}

\subsection{Performance comparison with multisine waveforms}

Fig. \ref{fig:mult} compares the WPT performance of the proposed chaotic and existing different $N$-tone multisine waveforms \cite{wdesg}. Furthermore, in this figure, we also consider the HPA non-linearities at the transmitter \cite{power}. The figure corroborates our claim that the proposed waveform significantly outperforms the multitone signals in terms of WPT. Moreover, it also demonstrates the effect of the transmission power $P_t$, as we observe that the harvested energy saturates after a certain $P_t$ is reached. The reason for this observation is attributed to the limited capability of a practical HPA at the transmitter. A typical HPA cannot transmit with any arbitrary amount of power and beyond a certain limit, it can only transmit with a fixed $P_t$, irrespective of the power of its input signal \cite{power}. In this figure, we observe this phenomenon around $P_t=25$ dBm and hence the saturating effect in the $P_t>25$ dBm range.

\section{Conclusion}

In this paper, we investigated the effects of conventional communication-based chaotic waveforms in WPT. In particular, we considered a point-to-point set-up, where a transmitter uses a DCSK generator and the receiver employs an analog correlator followed by an EH circuit. By taking into account the nonlinearities of the EH process, we proposed a novel WPT architecture and analyzed it in terms of the received signal's PAPR as well as the achieved harvested DC. We showed that both of these performance metrics are dependent on the parameters of the transmitted waveform. In addition, we demonstrated that DCSK waveforms outperform their unmodulated counterparts, in terms of EH performance, when a correlator is employed at the receiver. Based on this observation, a novel SR-DCSK-based waveform design has been investigated, which further enhances the WPT performance. Finally, we have shown that the proposed chaotic waveform outperforms the conventional multisine signals in terms of its WPT capability. An immediate extension of this work is to investigate  multi-dimensional chaotic signals for WPT, by considering a generalized frequency selective fading scenario.

\begin{figure}[!t]
\centering\includegraphics[width=\linewidth]{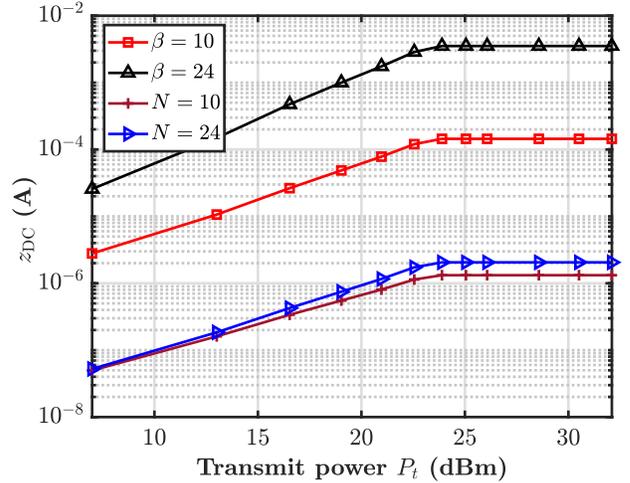}
\caption{Performance comparison of the proposed WPT-optimal SR-DCSK and multisine waveforms, considering HPA non-linearity; $m=4$.}
\label{fig:mult}
\end{figure}

\appendices

\section{Proof of Proposition \ref{prp1}}  \label{app0}

Without a correlator at the receiver, i.e. for $\psi=1$, the $\rm PAPR$ corresponding to the $l$-th transmitted symbol is
\begin{equation}
\mathrm{PAPR}=\frac{\max\limits_{l} \left\lbrace  \sum\limits_{k=2(l-1)\beta+1}^{2l\beta}|h_l|^2s_{l,k}^2 \right\rbrace }{\mathbb{E}\left\lbrace \sum\limits_{k=2(l-1)\beta+1}^{2l\beta}|h_l|^2s_{l,k}^2 \right\rbrace }.
\end{equation}
By considering a given channel instance $h_l$ and the invariant probability density function (PDF) of $x_k$, as \cite{ch2}
\begin{align}  \label{spdf}
f_X(x)=\begin{cases} 
\frac{1}{\pi\sqrt{1-x^2}}, & |x|< 1,\\
0, & \text{otherwise},
\end{cases}&
\end{align}
we obtain $\max\limits_{l} \left\lbrace  \sum\limits_{k=2(l-1)\beta+1}^{2l\beta}|h_l|^2s_{l,k}^2 \right\rbrace=2|h_l|^2\beta$ and
\begin{align} \label{deno}
&\mathbb{E}\left\lbrace \sum\limits_{k=2(l-1)\beta+1}^{2l\beta}|h_l|^2s_{l,k}^2 \right\rbrace& \nonumber \\
&=|h_l|^2\sum\limits_{k=2(l-1)\beta+1}^{2l\beta} \mathbb{E}\{s_{l,k}^2\} \nonumber \\
&=|h_l|^2\sum\limits_{k=2(l-1)\beta+1}^{2l\beta}\int_{-\infty}^{\infty}x^2f_X(x)dx=|h_l|^2\beta.
\end{align}
Hence, for $\psi=1$, we have
\begin{align}
{\rm PAPR} = \frac{2|h_l|^2\beta}{|h_l|^2\beta}=2.
\end{align}
On the other hand, for $\psi=2\beta$, i.e. with a correlator at the receiver,

\begin{equation}
\mathrm{PAPR}=\frac{\max\limits_{l}\left\lbrace \left( \sum\limits_{k=2(l-1)\beta+1}^{2l\beta}|h_l|s_{l,k} \right)^2 \right\rbrace}{\mathbb{E} \left\lbrace \left( \sum\limits_{k=2(l-1)\beta+1}^{2l\beta}|h_l|s_{l,k} \right)^2 \right\rbrace },
\end{equation}
where from (\ref{spdf}), we get
\begin{align}
\max\limits_{l}\left\lbrace \sum\limits_{k=2(l-1)\beta+1}^{2l\beta}|h_l|s_{l,k} \right\rbrace^2=4|h_l|^2\beta^2.
\end{align}
Note that for chaotic sequences generated by the Chebyshev map, we have $\mathbb{E}\left\lbrace s_{l,i}s_{l,j} \right\rbrace=0$ for $i \neq j$ \cite[Eq.~55]{nrml}. As such,

\begin{align}
\mathbb{E} \left\lbrace \left( \sum\limits_{k=2(l-1)\beta+1}^{2l\beta}|h_l|s_{l,k} \right)^2 \right\rbrace&=|h_l|^2\sum\limits_{k=2(l-1)\beta+1}^{2l\beta} \mathbb{E}\{s_{l,k}^2\} \nonumber \\
&=|h_l|^2\beta,
\end{align}
which follows from (\ref{deno}). Therefore, we have
\begin{align}
\mathrm{PAPR}=\frac{4|h_l|^2\beta^2}{|h_l|^2\beta}=4\beta.
\end{align}

\section{Proof of Theorem \ref{nakath1}}
\label{napp1}

As we consider a unit mean power Nakagami-$m$ distributed block fading channel,  the PDF of $|h|$ is
\begin{equation}    \label{nakadef}
f_{|h|}(z)=\frac{2m^mz^{2m-1}e^{-mz^2}}{\Gamma(m)}, \:\: \forall \:\:z \geq 0,
\end{equation}
where $\Gamma(\cdot)$ denotes the complete Gamma function and $m \geq 1$ controls the severity of the amplitude fading. The quantity $\sum\limits_{k=1}^{2\beta}s_k$ in \eqref{zdef1} can be alternatively written as
\begin{align} \label{nchrv}
\sum\limits_{k=1}^{2\beta}s_k &=(x_1+\dots+x_{\beta}+d(x_1+\dots+x_{\beta}))=p\sum\limits_{k=1}^{\beta}x_k,
\end{align}
where we have $p=1+d$. Then, by assuming equally likely transmissions of $d=\pm 1$, we have the PDF $f_P(p)=\frac{1}{2},$ $\forall$ $p\in \{0,2\}$. Hence, from \eqref{zdef1} we have
\begin{align}\label{zmc}
z_{\rm MC}&=\varepsilon_1\mathbb{E}\left\lbrace \left(  |h|\sum\limits_{k=1}^{2\beta}s_k \right) ^2 \right\rbrace  + \varepsilon_2\mathbb{E}\left\lbrace \left(  |h|\sum\limits_{k=1}^{2\beta}s_k \right) ^4 \right\rbrace \nonumber \\
&=\varepsilon_1\mathbb{E}\left\lbrace |h|^2 \right\rbrace \mathbb{E}\left\lbrace \left( p\sum\limits_{k=1}^{\beta}x_k \right) ^2 \right\rbrace \nonumber \\
& \qquad +\varepsilon_2\mathbb{E}\left\lbrace |h|^4 \right\rbrace \mathbb{E}\left\lbrace \left( p\sum\limits_{k=1}^{\beta}x_k \right) ^4 \right\rbrace.
\end{align}
Note that as we are considering a unit power Nakagami-$m$ fading scenario, we have $\mathbb{E}\left\lbrace |h|^2 \right\rbrace=1$ and
\begin{align}  \label{h4}
\mathbb{E}\left\lbrace |h|^4 \right\rbrace&=\frac{2m^m}{\Gamma(m)}\int_0^{\infty} z^{2m+3}e^{-mz^2} dz \nonumber \\
&=\frac{1}{m^2\Gamma(m)}\int_0^{\infty} v^{m+1}e^{-v}dv=\frac{(1+m)}{m},
\end{align}
which follows from the transformation $mz^2 \to v$. Then, the first term of \eqref{zmc} can be evaluated as
\begin{align}  \label{s2}
\mathbb{E}\left\lbrace \left( p\sum\limits_{k=1}^{\beta}x_k \right) ^2 \right\rbrace&=\mathbb{E}\left\lbrace p^2 \right\rbrace\mathbb{E}\left\lbrace \left( \sum\limits_{k=1}^{\beta}x_k \right) ^2 \right\rbrace \nonumber \\
&\overset{(a)}{=}2\mathbb{E}\left\lbrace \sum\limits_{k=1}^{\beta}x_k^2 + 2\sum_{\substack{l_1,l_2=1 \\ l_1\neq l_2}}^{\beta}x_{l_1}x_{l_2} \right\rbrace \nonumber \\
&=2\sum\limits_{k=1}^{\beta}\mathbb{E}\left\lbrace x_k^2 \right\rbrace+4\sum_{\substack{l_1,l_2=1 \\ l_1\neq l_2}}^{\beta}\mathbb{E}\left\lbrace x_{l_1}x_{l_2} \right\rbrace \nonumber \\
&=\beta,
\end{align}
where (a) follows from $\mathbb{E}\{p^2\} = 2$ and the multinomial theorem, and the final result follows from
\begin{equation}  \label{sq}
\mathbb{E}\left\lbrace x_k^2 \right\rbrace=\int_{-1}^{1}\frac{x_k^2dx}{\pi\sqrt{1-x_k^2}}=\frac{1}{2}
\end{equation}
and by using \cite[Eq.~55]{nrml}. Similarly, by using the multinomial theorem, the second term of \eqref{zmc} can be expanded as
\begin{align}  \label{s4}
&\mathbb{E}\left\lbrace \left( p\sum\limits_{k=1}^{\beta}x_k \right) ^4 \right\rbrace \nonumber \\
&=\mathbb{E}\left\lbrace p^4 \right\rbrace\mathbb{E}\left\lbrace \left( \sum\limits_{k=1}^{\beta}x_k \right) ^4 \right\rbrace \nonumber \\
&\overset{(b)}{=}\mathbb{E}\left\lbrace p^4 \right\rbrace\mathbb{E}\left\lbrace \sum\limits_{k_1+k_2+\cdots+k_{\beta}=4} \frac{4!}{k_1! \: k_2! \: \cdots \: k_{\beta}!} \prod\limits_{i=1}^{\beta} x_i^{k_i} \right\rbrace \nonumber \\
&\overset{(c)}{=}8 \left( \sum\limits_{k=1}^{\beta}\mathbb{E}\left\lbrace x_k^4 \right\rbrace+ 3\mathbb{E}^2\left\lbrace x_k^2 \right\rbrace\beta(\beta-1) \right)   \nonumber \\
&\overset{(d)}{=}8 \left( \frac{3\beta}{8}+ \frac{3}{4}\beta (\beta-1) \right)=3\beta(2\beta-1),
\end{align}
where $(b)$ follows from the multinomial theorem. Furthermore, $(c)$  follows from $\mathbb{E}\{p^4\} = 0\times\frac{1}{2}+2^4\times\frac{1}{2}=8$,
\begin{equation}
\mathbb{E}\left\lbrace x_k \right\rbrace=\int_{-1}^{1}\frac{x_kdx}{\pi\sqrt{1-x_k^2}}=0,
\end{equation}
and $(d)$ follows from \eqref{sq} and
\begin{equation}  \label{x4}
\mathbb{E}\left\lbrace x_k^4 \right\rbrace=\int_{-1}^{1}\frac{x_k^4dx}{\pi\sqrt{1-x_k^2}}=\frac{3}{8}.
\end{equation}
By combining \eqref{zmc}, \eqref{h4}, \eqref{s2}, and \eqref{s4}, we obtain
\begin{equation}
z_{\rm MC}=\varepsilon_1\beta+\varepsilon_2\frac{3(1+m)}{m}\beta(2\beta-1).
\end{equation}

\section{Proof of Theorem \ref{nakath2}} \label{napp2}

From the expression \eqref{zdef2}, we have
\begin{align}
z_{\rm MNC}&=\varepsilon_1\mathbb{E}\left\lbrace |h|^2\sum\limits_{k=1}^{2\beta}s_k^2 \right\rbrace  + \varepsilon_2\mathbb{E}\left\lbrace |h|^4\sum\limits_{k=1}^{2\beta}s_k^4 \right\rbrace \nonumber \\
&=\varepsilon_1\mathbb{E}\left\lbrace |h|^2 \right\rbrace \mathbb{E}\left\lbrace \sum\limits_{k=1}^{2\beta}s_k^2 \right\rbrace+\varepsilon_2\mathbb{E}\left\lbrace |h|^4 \right\rbrace \mathbb{E}\left\lbrace \sum\limits_{k=1}^{2\beta}s_k^4 \right\rbrace \nonumber \\
&=\varepsilon_1\left(1+\sum\limits_{k=1}^{\beta}\mathbb{E}\left\lbrace d_k^2 \right\rbrace \right) \left( \sum\limits_{k=1}^{\beta} \mathbb{E}\left\lbrace x_k^2 \right\rbrace \right) \nonumber \\
&\qquad+\varepsilon_2\left(1+\sum\limits_{k=1}^{\beta}\mathbb{E}\left\lbrace d_k^4 \right\rbrace \right) \left( \sum\limits_{k=1}^{\beta} \mathbb{E}\left\lbrace x_k^4 \right\rbrace \right) \nonumber \\
&\overset{(a)}{=}\varepsilon_1\beta+\varepsilon_2\frac{3(1+m)}{4m}\beta,
\end{align}
where $(a)$ follows from \eqref{sq}, \eqref{x4} and by assuming equally likely transmissions of $d_k=\pm 1,$ $\forall$ $k$.

\section{Proof of Proposition \ref{prop2}} \label{napp3}

We prove the proposition in the form of two separate sub-cases: i) with, and ii) without the correlator.

\subsubsection{With correlator}

Based on \eqref{zdef1} and \eqref{corr3}, we obtain the harvested DC as

\begin{equation}
z_{\rm UM,C}=\varepsilon_1\mathbb{E}\left\lbrace \left(  |h|\sum\limits_{k=1}^{2\beta}x_k \right) ^2 \right\rbrace+ \varepsilon_2\mathbb{E}\left\lbrace \left(  |h|\sum\limits_{k=1}^{2\beta}x_k \right) ^4 \right\rbrace.
\end{equation}
Hence,
\begin{align}
z_{\rm UM,C}&=\varepsilon_1\mathbb{E}\left\lbrace |h|^2 \right\rbrace\mathbb{E}\left\lbrace \left( \sum\limits_{k=1}^{2\beta}x_k \right) ^2 \right\rbrace \nonumber \\
& \qquad +\varepsilon_2\mathbb{E}\left\lbrace |h|^4 \right\rbrace\mathbb{E}\left\lbrace \left( \sum\limits_{k=1}^{2\beta}x_k \right) ^4 \right\rbrace \nonumber \\
&=\varepsilon_1\mathbb{E}\left\lbrace \sum\limits_{k=1}^{2\beta}x_k^2 + 2\sum_{\substack{l_1,l_2=1 \\ l_1\neq l_2}}^{2\beta}x_{l_1}x_{l_2} \right\rbrace \nonumber \\
&\qquad+\varepsilon_2\frac{(1+m)}{m}\mathbb{E}\left\lbrace \sum\limits_{k=1}^{2\beta}x_k^4 + 6\sum_{\substack{l_1,l_2=1 \\ l_1\neq l_2}}^{\beta(2\beta-1)}x_{l_1}^2x_{l_2}^2 \right\rbrace \nonumber \\
&=\varepsilon_1\sum\limits_{k=1}^{2\beta} \mathbb{E} \left\lbrace x_k^2 \right\rbrace \nonumber \\
& + \varepsilon_2\frac{(1+m)}{m} \left( \sum\limits_{k=1}^{2\beta} \mathbb{E} \left\lbrace x_k^4 \right\rbrace + 6\beta(2\beta-1)\mathbb{E}^2\left\lbrace x_k^2 \right\rbrace \right) \nonumber \\
&\overset{(a)}{=}\varepsilon_1\beta+\varepsilon_2\frac{3(1+m)}{m}\beta(\beta-\frac{1}{4}),
\end{align}
where $(a)$ follows from \eqref{sq} and \eqref{x4}.

\subsubsection{Without correlator}

Based on \eqref{zdef2} and \eqref{corr3}, we have
\begin{align}
z_{\rm UM,NC}&=\varepsilon_1\mathbb{E}\left\lbrace |h|^2\sum\limits_{k=1}^{2\beta}x_k^2 \right\rbrace  + \varepsilon_2\mathbb{E}\left\lbrace |h|^4\sum\limits_{k=1}^{2\beta}x_k^4 \right\rbrace \nonumber \\
&\overset{(b)}{=}\varepsilon_1\times 2\beta \times \frac{1}{2} +\varepsilon_2\times \frac{(1+m)}{m}\times 2\beta \times\frac{3}{8} \nonumber \\
&=\varepsilon_1\beta+\varepsilon_2\frac{3(1+m)}{4m}\beta,
\end{align}
where $(b)$ follows from \eqref{sq} and \eqref{x4}.

\section{Proof of Theorem \ref{nakath3}}  \label{napp4}

We prove the theorem in the form of two separate sub-cases with $\beta_r=0$ and $\beta_r>0.$

\subsection{Case $\beta_r=0$}

In this case, we obtain the harvested energy $z_{\rm SR}$ as
\begin{align}
z_{\rm SR}&=\varepsilon_1\mathbb{E}\left\lbrace \left( d \sum\limits_{k=1}^{\beta}x_k \right) ^2 \right\rbrace  \nonumber \\
& \qquad + \varepsilon_2\frac{(1+m)}{m}\mathbb{E}\left\lbrace \left( d\sum\limits_{k=1}^{\beta}x_k \right) ^4 \right\rbrace.
\end{align}
From Appendix \ref{napp1}, we assume an equally likely transmission of $d=\pm 1$ and by following a similar framework, we obtain
\begin{align}
z_{\rm SR}&=\varepsilon_1\beta^2\mathbb{E}\left\lbrace d^2 \right\rbrace \mathbb{E} \left\lbrace x_k^2 \right\rbrace + \varepsilon_2\beta^4\mathbb{E}\left\lbrace d^4 \right\rbrace \mathbb{E} \left\lbrace x_k^4 \right\rbrace \nonumber \\
&=\frac{1}{2}\varepsilon_1\beta^2+\frac{3(1+m)}{8m}\varepsilon_2\beta^4.
\end{align}

\subsection{Case $\beta_r>0$}

Based on \eqref{zdef1} and \eqref{symsr}, we obtain
\begin{align}  \label{lappp}
z_{\rm SR}&=\varepsilon_1\mathbb{E}\left\lbrace \left(  \sum\limits_{k=1}^{\beta+\beta_r}s_k \right) ^2 \right\rbrace \nonumber \\
& \qquad + \varepsilon_2\frac{(1+m)}{m}\mathbb{E}\left\lbrace \left( \sum\limits_{k=1}^{\beta+\beta_r}s_k \right) ^4 \right\rbrace,
\end{align}
where from \eqref{symsr}, we have
\begin{equation}
\sum\limits_{k=1}^{\beta+\beta_r}s_k=(1+\zeta d)\sum\limits_{k=1}^{\beta_r}x_k.
\end{equation}
Then, the first term of \eqref{lappp} can be evaluated as
\begin{align}
\mathbb{E}\left\lbrace \left(  \sum\limits_{k=1}^{\beta+\beta_r}s_k \right) ^2 \right\rbrace&=\mathbb{E}\left\lbrace \left(  1+\zeta d \right) ^2 \right\rbrace \mathbb{E} \left\lbrace \left(  \sum\limits_{k=1}^{\beta_r}x_k \right) ^2 \right\rbrace,
\end{align}
where
\begin{equation}  \label{dl1}
\mathbb{E}\left\lbrace \left(  1+\zeta d \right) ^2 \right\rbrace=\frac{1}{2}\left\lbrace \left(  1+\zeta \right)^2+\left(  1-\zeta \right) ^2 \right\rbrace=1+\zeta^2,
\end{equation}
and
\begin{equation}  \label{dl2}
\mathbb{E} \left\lbrace \left(  \sum\limits_{k=1}^{\beta_r}x_k \right) ^2 \right\rbrace=\sum\limits_{k=1}^{\beta_r}\mathbb{E} \left\lbrace x_k^2 \right\rbrace=\frac{\beta_r}{2}.
\end{equation}
By combining \eqref{dl1} and \eqref{dl2}, we obtain
\begin{equation}  \label{lf}
\mathbb{E}\left\lbrace \left(  \sum\limits_{k=1}^{\beta+\beta_r}s_k \right) ^2 \right\rbrace=\frac{\beta_r}{2} \left( 1+\zeta^2 \right)=\frac{\beta_r^2+\beta^2}{2\beta_r}.
\end{equation}
Similarly, by using the multinomial theorem, the second term in \eqref{lappp} can be expanded as
\begin{align}
\mathbb{E}\left\lbrace \left(  \sum\limits_{k=1}^{\beta+\beta_r}s_k \right) ^4 \right\rbrace&=\mathbb{E}\left\lbrace \left(  1+\zeta d \right) ^4 \right\rbrace \mathbb{E} \left\lbrace \left(  \sum\limits_{k=1}^{\beta_r}x_k \right) ^4 \right\rbrace,
\end{align}
where
\begin{align}  \label{dnl1}
\mathbb{E}\left\lbrace \left(  1+\zeta d \right) ^4 \right\rbrace&=\frac{1}{2}\left\lbrace \left(  1+\zeta \right)^4+\left(  1-\zeta \right) ^4 \right\rbrace \nonumber \\
&=1+6\zeta^2+\zeta^4
\end{align}
and
\begin{align}  \label{dnl2}
\mathbb{E} \left\lbrace \left(  \sum\limits_{k=1}^{\beta_r}x_k \right) ^4 \right\rbrace&=\sum\limits_{k=1}^{\beta_r}\mathbb{E}\left\lbrace x_k^4 \right\rbrace+ 3\mathbb{E}^2\left\lbrace x_k^2 \right\rbrace\beta(\beta-1) \nonumber \\
&=\frac{3\beta_r}{8}+ \frac{3}{4}\beta_r (\beta_r-1).
\end{align}
By combining \eqref{dnl1} and \eqref{dnl2}, we obtain
\begin{equation}  \label{nlf}
\mathbb{E}\left\lbrace \left(  \sum\limits_{k=1}^{\beta+\beta_r}s_k \right) ^4 \right\rbrace=\frac{3}{8} \left( 1+6\zeta^2+\zeta^4 \right) \left( 2\beta_r^2-\beta_r \right).
\end{equation}
Hence, we have $z_{\rm SR}$ equal to

\begin{align}
z_{\rm SR}\!&=\!\varepsilon_1\mathbb{E}\left\lbrace \left(  \sum\limits_{k=1}^{\beta+\beta_r}s_k \right) ^2 \right\rbrace  + \varepsilon_2\frac{(1+m)}{m}\mathbb{E}\left\lbrace \left( \sum\limits_{k=1}^{\beta+\beta_r}s_k \right) ^4 \right\rbrace \nonumber \\
&=\varepsilon_1 \frac{\beta_r^2+\beta^2}{2\beta_r}+\varepsilon_2\frac{3(1+m)}{8m}\left( 1+6\zeta^2+\zeta^4 \right) \left( 2\beta_r^2-\beta_r \right).
\end{align}

\bibliographystyle{IEEEtran}
\bibliography{refs}
\end{document}